\theoremstyle{plain}
\newtheorem{theorem}{Theorem}
\newtheorem{lemma}[theorem]{Lemma}
\newtheorem{proposition}[theorem]{Proposition}
\newtheorem{definition}{Definition}
\newtheorem{assumption}{Assumption}
\theoremstyle{remark}
\newtheorem{remarkenv}{Remark}
\newenvironment{remark}{\begin{remarkenv}}{\hfill\raisebox{0.2mm}[0cm][0cm]{\scalebox{1.1}{$\lhd$}}\end{remarkenv}}
\newtheorem{exampleenv}{Example}
\newcommand{\sortbib}[1]{}
\newcommand{\N}{\ensuremath{\mathds{N}}} 
\newcommand{\R}{\ensuremath{\mathds{R}}} 
\newcommand{\dd}{\ensuremath{\mathrm{d}}}                         
\newcommand{\di}{\ensuremath{\:\mathrm{d}}}                       
\newcommand{\defl}{\ensuremath{\mathrel{\mathop:}=}}              
\newcommand{\defr}{\ensuremath{=\mathrel{\mathop:}}}              
\newcommand{\T}{\ensuremath{\mathrm{T}}}                          
\newcommand{\classK}{\ensuremath{\mathcal{K}}\xspace}             
\newcommand{\classKinf}{\ensuremath{\mathcal{K}_{\infty}}\xspace} 
\newcommand{\classKL}{\ensuremath{\mathcal{KL}}\xspace}           
\newcommand{\Linf}{\ensuremath{\mathcal{L}_{\infty}}}
\newcommand{\Ical}{\ensuremath{\mathcal{I}}}
\newcommand{\Lcal}{\ensuremath{\mathcal{L}}}
\newcommand{\Scal}{\ensuremath{\mathcal{S}}}
\newcommand{\Tcal}{\ensuremath{\mathcal{T}}}
\newcommand{\Xcal}{\ensuremath{\mathcal{X}}}
\newcommand{\smin}{\ensuremath{\text{\rm min}}}
\newcommand{\smax}{\ensuremath{\text{\rm max}}}
\newcommand{\sref}{\ensuremath{\text{\rm ref}}}
\newcommand{\infnorm}[2]{%
    \ensuremath{\|#1\|_{\infty\vphantom{l}}^{\raisebox{0pt}[6.5pt][0pt]{$\scriptstyle #2$}}}}
\newcommand{\id}{\ensuremath{\text{id}}}
\newcommand{\dt}{\ensuremath{\Delta t}}
\newcommand{\vref}{\ensuremath{v_{\sref}}}
\title{\LARGE String stability and a delay-based spacing policy for vehicle platoons subject to disturbances}
\author{B.~Besselink, K.H.~Johansson%
\thanks{This research is financially supported by the European Union Seventh Framework Programme under the project COMPANION, the Swedish Research Council, and the Knut and Alice Wallenberg Foundation.\newline B.~Besselink is with the Johann Bernoulli Institute for Mathematics and Computer Science, University of Groningen, Groningen, the Netherlands (email: b.besselink@rug.nl). K.H.~Johansson is with the ACCESS Linnaeus Centre and the Department of Automatic Control, School of Electrical Engineering, KTH Royal Institute of Technology, Stockholm, Sweden (email: kallej@kth.se). The research reported in this work was performed when the first author was at KTH Royal Institute of Technology, Stockholm, Sweden.}}
\begin{document}

\maketitle

\begin{abstract} 
A novel delay-based spacing policy for the control of vehicle platoons is introduced together with a notion of disturbance string stability. The delay-based spacing policy specifies the desired inter-vehicular distance between vehicles and guarantees that all vehicles track the same spatially varying reference velocity profile, as is for example required for heavy-duty vehicles driving over hilly terrain. Disturbance string stability is a notion of string stability of vehicle platoons subject to external disturbances on all vehicles that guarantees that perturbations do not grow unbounded as they propagate through the platoon. Specifically, a control design approach in the spatial domain is presented that achieves tracking of the desired spacing policy and guarantees disturbance string stability with respect to a spatially varying reference velocity. The results are illustrated by means of simulations.
\end{abstract}

\section{Introduction}
Intelligent transportation systems have the potential to increase efficiency and safety of road transportation through the use of increased automation. Platooning, which amounts to the operation of vehicles in closely-spaced groups, is a particularly relevant example in which the reduced distances between vehicles lead to a decreased aerodynamic drag and a better utilization of the road infrastructure \cite{varaiya_1993}. In particular, experiments using heavy-duty vehicles have shown that the improved aerodynamics of the group leads to a reduction of fuel consumption of up to ten percent, see~\cite{alam_2015b} and~\cite{bonnet_2000}.

In order to ensure the safe operation of such platoons of closely-spaced vehicles, automation of the longitudinal dynamics is required. Early works on this topic are given by \cite{levine_1966} and \cite{chu_1974} and many results have appeared since, see, e.g., \cite{stankovic_2000,jovanovic_2005,barooah_2009,naus_2010,zhang_2015}. Two fundamental aspects in the resulting behavior are, firstly, the spacing policy, which specifies the desired (and not necessarily static) inter-vehicular distance, and, secondly, the influence of external disturbances on the platoon formation and stability. The current paper focusses on these aspects by the development of control strategies that rely on the introduction of a novel delay-based spacing policy and a new definition of platoon stability (which will be referred to as disturbance string stability) that explicitly takes external disturbances into account.

The constant spacing policy and constant headway policy are most commonly considered in the literature, where the former requires a constant distance between two successive vehicles \cite{swaroop_1999}. The constant headway policy \cite{ioannou_1993} relaxes this requirement by including a dependence on the velocity of the follower vehicle. A comparison can be found in \cite{swaroop_1994}, whereas nonlinear spacing policies are given in \cite{yanakiev_1998}. However, these spacing policies are typically employed under the implicit assumption that the platoon should track a constant reference velocity. The tracking of varying reference velocity profiles has received considerably less attention, even though this is desirable in many practical situations. An important example is given by a heavy-duty vehicle traversing a road segment with varying road topography, for which it is known that the fuel-optimal velocity profile is generally varying \cite{hellstrom_2009}. For a platoon traversing a hilly road segment, it is desirable for each vehicle to track the same velocity profile in the spatial domain (i.e., relative to the position on the road). This is however incompatible with the constant spacing and constant headway strategies, for which the specification on the inter-vehicular distance might require vehicles to accelerate while climbing a hill. As this is potentially infeasible due to limited engine power, this leads to unsatisfactory platoon behavior, as has been recognized in experiments \cite{alam_2015b}. A delay-based spacing policy is introduced in this paper that guarantees that all vehicles track the same velocity profile in the spatial domain.

Stability analysis of interconnected systems such as vehicle platoons generally relies on notions of string stability, which characterizes the amplification (or, in fact, the lack thereof) of disturbances through the group (string) of vehicles. A formal definition is given in \cite{swaroop_1996} by requiring uniform boundedness of the states of all systems (see \cite{pant_2002} for a generalization towards higher spatial dimensions). Alternative definitions require bounds on the amplification of perturbations as a measure of string stability, e.g., \cite{fenton_1968,peppard_1974,sheikholeslam_1993,eyre_1998,liang_1999}, but these notions are typically only defined for linear systems. For an overview of string stability properties, see \cite{ploeg_2014}. Note that these references either consider autonomous systems or interconnected systems in which only the lead vehicle in a platoon is subject to external disturbances. The practically relevant case in which each vehicle is subject to external disturbances is considered in~\cite{seiler_2004}, whereas extensions are presented in~\cite{middleton_2010} and~\cite{peters_2014}. However, the analysis presented in these works relies on a transfer function approach and is therefore only applicable to linear systems. Moreover, in these works, only input disturbances are considered and the effect of initial conditions is not taken into account.

In the current paper, a definition of disturbance string stability for interconnected systems is introduced that explicitly includes the effects of initial condition perturbations and external disturbances on each vehicle. This notion provides a direct extension of the definition in \cite{swaroop_1996} by using the input-to-state stability framework introduced in \cite{sontag_1989}. Specifically, disturbance string stability can be regarded as a uniform (over the vehicle index) input-to-state stability property and applies also to nonlinear systems. It extends the notion of leader-to-formation stability in \cite{tanner_2004b} to platoons with external disturbances that are not limited to the leader.

The main contributions of this paper are as follows. First, a novel delay-based spacing policy is presented that guarantees that all vehicles in a platoon track a desired (and spatially varying) reference velocity profile. Second, the notion of disturbance string stability is introduced as a relevant stability property for interconnected systems subject to external disturbances. Third, on the basis of these definitions, a controller design method is presented that guarantees the tracking of the delay-based spacing policy and guarantees disturbance string stability with respect to the varying reference velocity. This design is performed in the spatial domain rather than the time domain, which leads to a simple design procedure that avoids the use of delay-dependent synthesis techniques. Using this controller design it is shown that string stability follows from a suitable choice of the spacing policy rather than the exact choice of the controller, which is the fourth contribution of this paper. Preliminary results on platoon control using a delay-based spacing policy can be found in \cite{besselink_2015}.

The remainder of this paper is outlined as follows. In Section~\ref{sec_spacingpolicies}, existing spacing policies are discussed and a motivation is provided for the introduction of the delay-based spacing policy used in this paper. Next, Section~\ref{sec_stringstab} introduces the notion of disturbance string stability and provides results that guarantee disturbance string stability of platoons on the basis of local properties associated to single vehicles. A controller that tracks the desired spacing policy is discussed in Section~\ref{sec_platooncontrol} and its disturbance string stability properties are shown. The results are illustrated by means of an example in Section~\ref{sec_evaluation} before conclusions are stated in Section~\ref{sec_conclusions}.

\textit{Notation.} The field of real numbers is denoted by $\R$, whereas $\N = \{1,2,\ldots\}$. For a vector $x\in\R^n$, its Euclidian norm is given as $|x| = \sqrt{x^{\T}x}$. Given a signal $x:\Tcal\rightarrow\R^n$, $\infnorm{x}{\Tcal}$ denotes its $\Linf$ norm defined as $\infnorm{x}{\Tcal} = \sup_{t\in\Tcal} |x(t)|$, where the shorthand notation $\|x\|_{\infty} = \infnorm{x}{[0,\infty)}$ is used when $\Tcal = [0,\infty)$. A continuous function $\alpha:[0,a)\rightarrow[0,\infty)$ is said to be of class $\classK$ if it is strictly increasing and $\alpha(0) = 0$. If, in addition, $a=\infty$ and $\alpha(r)\rightarrow\infty$ as $r\rightarrow\infty$, it is of class $\classKinf$. A continuous function $\beta:[0,a)\times[0,\infty)\rightarrow[0,\infty)$ is said to be of class $\classKL$ if, for each fixed $s$, the function $\beta(\cdot,s)$ is of class $\classK$ and, for each fixed $r$, $\beta(r,\cdot)$ is decreasing and satisfies $\beta(r,s)\rightarrow0$ as $s\rightarrow\infty$.

\section{Spacing policies and motivation}\label{sec_spacingpolicies}
The definition of the spacing policy has a crucial impact on the dynamic behavior of platoons of closely-spaced vehicle. Figure~\ref{fig_spacing_trucks} illustrates the spacing between two vehicles $i-1$ and $i$ in a platoon. In the literature, several different spacing policies have been proposed, of which the \emph{constant spacing} policy and the \emph{constant headway} policy are the most notable. These policies are shortly reviewed in this section, providing a motivation for a novel spacing policy as analyzed in the remainder of this paper: the \emph{delay-based} spacing policy.

\begin{figure}
\begin{center}
\includegraphics[scale=1]{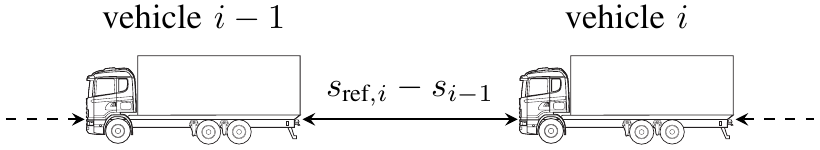}
\caption{Desired spacing policy $s_{\sref,i}(t) - s_{i-1}(t)$ between automatically controlled vehicles in a platoon.}
\label{fig_spacing_trucks}
\end{center}
\end{figure}

Let $s_i$ denote the longitudinal position of vehicle $i$ and $v_i$ its velocity. Naturally, they satisfy the kinematic relation
\begin{align}
\dot{s}_i(t) \defl \frac{\dd s_i}{\dd t}(t) = v_i(t).\label{eqn_kinematics}
\end{align}
A spacing policy describes the desired behavior $s_{\sref,i}(t)$ of vehicle $i$ on the basis of its predecessor $i-1$. Figure~\ref{fig_spacingpolicies} depicts the velocity of all vehicles in a platoon for the constant spacing, constant headway, and delay-based spacing policies. Here, it is assumed that the velocity of the lead vehicle $v_0(t)$ is prescribed and all follower vehicles track the desired behavior perfectly, i.e., $s_i(t) = s_{\sref,i}(t)$. The policies are further described next.

\begin{figure}
  \begin{center}
	\includegraphics[scale=1]{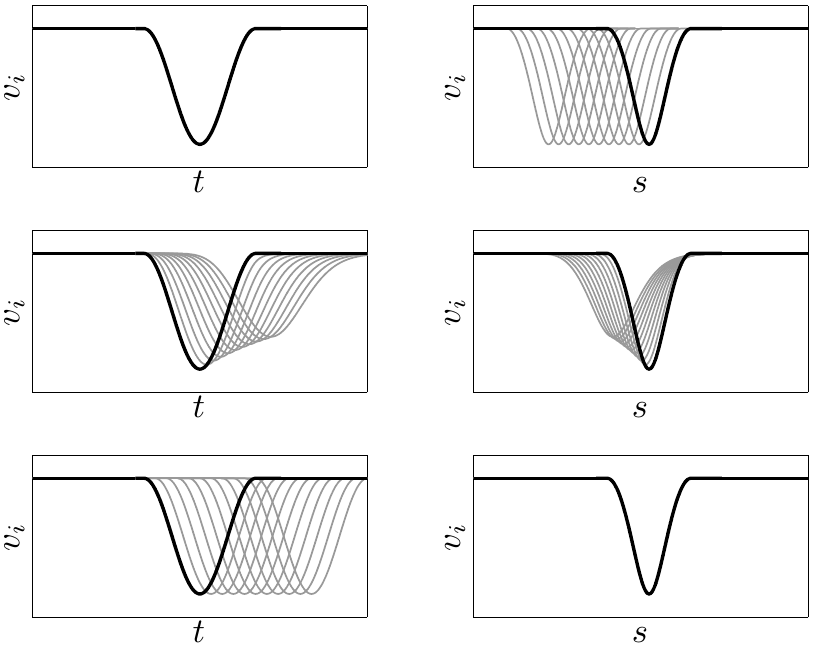}
    \caption{Velocities $v_i$ of ten follower vehicles (gray) as a result of a predefined velocity profile of the lead vehicle (black) for a constant spacing policy (top row), constant headway policy (middle row), and delay-based policy (bottom row). The left column shows the velocity as a function of time $t$, whereas the right column gives the velocity as a function of space $s$.}
    \label{fig_spacingpolicies}
  \end{center}
\end{figure}

The constant spacing policy (e.g., \cite{swaroop_1999}) takes the form
\begin{align}
s_{\sref,i}(t) = s_{i-1}(t) - d, \label{eqn_constantspacing}
\end{align}
where $d\geq0$ is the desired inter-vehicular distance. By using the assumption $s_i(t) = s_{\sref,i}(t)$ and (\ref{eqn_kinematics}), the policy (\ref{eqn_constantspacing}) implies that changes in velocity occur simultaneously in time (i.e., $v_i(t) = v_{i-1}(t)$). This is also apparent from the top left graph in Figure~\ref{fig_spacingpolicies}. If the change in velocity of the lead vehicle is the result of a disturbance, it is clear that the effect of this disturbance is not suppressed throughout the string. In fact, it has been shown in \cite{seiler_2004} that disturbance attenuation cannot be obtained for any linear controller that only uses measurements of the preceding vehicle $i-1$ for the control of vehicle~$i$.

An alternative spacing policy that inherently attenuates the effect of disturbances is given by the constant headway policy (e.g., \cite{swaroop_1994,ioannou_1993}), which includes a dependence on the velocity $v_i$ as
\begin{align}
s_{\sref,i}(t) = s_{i-1}(t) - (d + hv_i(t)), \label{eqn_constantheadway}
\end{align}
with $h>0$. By again using $s_i(t) = s_{\sref,i}(t)$ and (\ref{eqn_kinematics}), this can be written as
\begin{align}
h\dot{s}_i(t) = -s_i(t) + s_{i-1}(t) - d,
\end{align}
which shows that the desired reference position is essentially obtained by application of a first-order filter to the position of the preceding vehicle. It is this filtering, which is also apparent from the middle left graph in Figure~\ref{fig_spacingpolicies}, that is responsible for the inherent attenuation of disturbances for platoon controllers based on the constant headway policy.

However, it is clear from the graphs in the right column of Figure~\ref{fig_spacingpolicies} that, for the constant spacing and constant headway spacing policies, the changes in velocity occur on different positions in space for successive vehicles in the platoon. If the velocity change of the first vehicle was due to road properties such as hills rather than small undesired disturbances, this is potentially a large disadvantage. To illustrate this, consider a platoon of heavy-duty vehicles climbing a hill. Due to limited engine power, a large gradient can cause the lead vehicle of the platoon to decrease velocity as in Figure~\ref{fig_spacingpolicies}. In this case, follower vehicles might be required to have a higher velocity on this hill (i.e., at the same location in space) when they are subject to a constant spacing or constant headway policy. This might be infeasible due to limited engine power and leads to undesired platoon behavior and increased fuel consumption, as recognized in \cite{alam_2013} and \cite{turri_2016}.

In this paper, a spacing policy is introduced that guarantees that vehicles track the same velocity profile \emph{in space}, which avoids the aforementioned disadvantages. The delay-based spacing policy is given as
\begin{align}
s_{\sref,i}(t) = s_{i-1}(t - \dt), \label{eqn_constanttimegap}
\end{align}
where vehicle $i$ tracks a time-delayed version of the trajectory of the preceding vehicle, with time gap $\dt>0$ (see also \cite{newell_2002}). The policy (\ref{eqn_constanttimegap}) indeed achieves equal velocity profiles in space, as formalized in the following proposition.
\begin{proposition}\label{lem_timegap}
Consider the kinematics (\ref{eqn_kinematics}) and assume $s_i(t) = s_{\sref,i}(t)$ and $v_i(t)>0$ for all $t\in\R$. Then, (\ref{eqn_constanttimegap}) holds if and only if\footnote{The slight abuse of notation $v_i(t)$ and $v_i(s)$ will be used to indicate the velocity of vehicle $i$ as a function of time and space, respectively.}, for some function $\vref(\cdot)$,
\begin{align}
v_i(s) = v_{i-1}(s) = \vref(s).
\end{align}
\end{proposition}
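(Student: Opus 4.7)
The plan is to leverage the hypothesis $v_i(t)>0$ to move freely between the time and space parameterizations. Since $\dot{s}_i(t)=v_i(t)>0$ for all $t$, each trajectory $s_i(\cdot)$ is strictly increasing and therefore admits a continuously differentiable inverse $t_i(\cdot)$, defined by $s_i(t_i(s))=s$. The space-domain velocity is then $v_i(s)=v_i(t_i(s))$, and by the chain rule $t_i'(s)=1/v_i(s)$.

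For the ``only if'' direction, I would start from $s_i(t)=s_{\sref,i}(t)=s_{i-1}(t-\dt)$ and proceed along two parallel lines. Differentiating in $t$ gives the time-domain relation $v_i(t)=v_{i-1}(t-\dt)$. Substituting $t=t_i(s)$ into the position identity yields $s=s_{i-1}(t_i(s)-\dt)$, hence $t_{i-1}(s)=t_i(s)-\dt$. Composing these two identities then produces
\begin{align*}
v_i(s)=v_i(t_i(s))=v_{i-1}(t_i(s)-\dt)=v_{i-1}(t_{i-1}(s))=v_{i-1}(s),
\end{align*}
and setting $\vref(s)\defl v_i(s)$ completes this direction.

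For the ``if'' direction, assume $v_i(s)=v_{i-1}(s)=\vref(s)$ for all $s$. Then both inverse-time functions satisfy the same scalar ODE $t_i'(s)=t_{i-1}'(s)=1/\vref(s)$, so their difference $t_i(s)-t_{i-1}(s)$ is independent of $s$. Evaluating this constant difference against the assumed platoon configuration (that vehicle $i$ lags vehicle $i-1$ with time gap $\dt$) gives $t_i(s)=t_{i-1}(s)+\dt$, and inverting this yields $s_i(t)=s_{i-1}(t-\dt)$, which is \eqref{eqn_constanttimegap} under $s_i(t)=s_{\sref,i}(t)$.

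The main obstacle is not really computational — the chain-rule manipulations are routine once $v_i>0$ has been used to guarantee invertibility of $s_i(\cdot)$. The delicate point is the ``if'' direction, where equality of the two velocity profiles in space only constrains the time shift between the trajectories up to a constant. Pinning this constant down as the specific value $\dt$ requires matching one initial condition consistent with the spacing policy, after which the equivalence is clean.
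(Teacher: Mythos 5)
Your proof is correct and follows essentially the same route as the paper: both translate the delay policy into the constant spatial-domain offset $t_i(s) = t_{i-1}(s) + \dt$ and exploit $\dd t_i/\dd s = 1/v_i(s)$, with only the cosmetic difference that you differentiate and compose where the paper integrates $1/v_i$ over arbitrary intervals. The ``delicate point'' you flag in the converse (the time shift is only determined up to a constant) is handled in the paper by simply \emph{defining} $\dt$ to be that constant offset $t_i(s_0)-t_{i-1}(s_0)$, so your anchoring via an initial condition is, if anything, slightly more explicit than the paper's own treatment.
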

\begin{proof}
In order to prove the proposition, let $s$ be a point in space and let $t_i(s)$ be the time instance when vehicle $i$ passes that point. Note that the assumption $v_i(t)>0$ for all $t\in\R$ guarantees that $t_i(s)$ is uniquely defined. Then, using $s_i(t) = s_{\sref,i}(t)$, (\ref{eqn_constanttimegap}) can equivalently be written as
\begin{align}
t_i(s) = t_{i-1}(s) + \dt,
\label{eqn_prp_timegap_proof_step1}
\end{align}
for all $s\in\R$. Next, the expression of the kinematic relation (\ref{eqn_kinematics}) in spatial domain leads to
\begin{align}
\frac{\dd t_i}{\dd s}(s) = \frac{1}{v_i(s)},
\end{align}
after which integration yields
\begin{align}
t_i(s_1) - t_i(s_0) = \int_{s_0}^{s_1} \frac{1}{v_i(s)}\di s,
\label{eqn_prp_timegap_proof_step2}
\end{align}
for some initial position $s_0$. When considering (\ref{eqn_prp_timegap_proof_step2}) for vehicles $i$ and $i-1$, the subtraction of both results and use of (\ref{eqn_prp_timegap_proof_step1}) leads to
\begin{align}
\int_{s_0}^{s_1} \frac{1}{v_i(s)} - \frac{1}{v_{i-1}(s)} \di s
=  \dt - \dt = 0.
\label{eqn_prp_timegap_proof_step3}
\end{align}
As (\ref{eqn_prp_timegap_proof_step3}) holds for all $s_0,s_1\in\R$ such that $s_1\geq s_0$, it is clear that $v_i(s) = v_{i-1}(s) \defr \vref(s)$ for all $s$, proving the first part of the proposition.

To prove the converse, assume that $v_i(s) = v_{i-1}(s) = \vref(s)$. Subsitution of this in the left-hand term in (\ref{eqn_prp_timegap_proof_step3}) gives $t_i(s_1) - t_{i-1}(s_1) = t_i(s_0) - t_{i-1}(s_0) \defr \dt$, finalizing the proof.
\end{proof}

Motivated by the discussion above, the objective of this paper is the design of a controller that, first, achieves asymptotic tracking of a spatially varying common reference velocity $\vref(\cdot)$ and delay-based spacing policy (\ref{eqn_constanttimegap}), and, second, guarantees string stability with respect to this desired trajectory and in the presence of external disturances. In order to achieve the latter, the notion of disturbance string stability is introduced in the next section.

\section{String stability analysis with disturbances}\label{sec_stringstab}
Consider a platoon of automatically controlled vehicles as represented through the autonomous cascaded interconnection
\begin{align}
\begin{array}{rcl}
\dot{x}_0 &=& f(x_0,0), \\
\dot{x}_i &=& f(x_i,x_{i-1}), \quad i\in\Ical_N,
\end{array}\label{eqn_cascsys}
\end{align}
where $\Ical_N = \{1,2,\ldots,N\}$. In the context of platooning, $\Ical_N$ represents the set of follower vehicles, whereas $\Ical_N^0 = \{0,1,\ldots,N\}$ includes the lead vehicle with index $0$. In (\ref{eqn_cascsys}), $x_i\in\R^n$, $i\in\Ical_N^0$ is the state of the system and the function $f:\R^n\times\R^n\rightarrow\R^n$ satisfying $f(0,0) = 0$ is assumed to be locally Lipschitz continuous in both arguments.

For such systems, the notion of string stability was introduced in \cite{swaroop_1996} according to the following definition:
\begin{definition}\label{def_stringstab}
The equilibrium $x_i = 0$, $i\in\Ical_N^0$, of the platoon (\ref{eqn_cascsys}) is said to be string stable if, for any $\varepsilon > 0$, there exists a $\delta > 0$ such that, for all $N\in\N$,
\begin{align}
\sup_{i\in\Ical_N^0} |x_i(0)| < \delta \;\Rightarrow\; \sup_{i\in\Ical_N^0} |x_i(t)| < \varepsilon, \;\; \forall t\geq0.
\end{align}
\end{definition}
Asymptotic string stability is defined in \cite{swaroop_1996} as follows:
\begin{definition}\label{def_stringstab_asymp}
The equilibrium $x_i = 0$, $i\in\Ical_N^0$, of the platoon (\ref{eqn_cascsys}) is said to be asymptotically string stable if it is string stable and $\delta$ can be chosen such that
\begin{align}
\sup_{i\in\Ical_N^0} |x_i(0)| < \delta \;\Rightarrow\; \lim_{t\rightarrow\infty}\sup_{i\in\Ical_N^0} |x_i(t)| = 0.
\end{align}
\end{definition}
\begin{remark}\label{rem_stringstab_classKLfunc}
Definitions~\ref{def_stringstab} and~\ref{def_stringstab_asymp} are similar to the standard notion of Lyapunov stability, with the difference that, in the former, perturbations from the equilibrium are measured as the worst-case perturbation over all subsystems. Nonetheless, by exploiting this similarity, it follows directly that asymptotic string stability can equivalently be expressed through the existence of a function $\bar{\beta}$ of class $\classKL$ and a constant $\bar{c}>0$ such that, for all $N\in\N$,
\begin{align}
\sup_{i\in\Ical_N^0} |x_i(t)| \leq \bar{\beta}\!\left(\sup_{i\in\Ical_N^0} |x_i(0)|,t\right), \; \forall \sup_{i\in\Ical_N^0} |x_i(0)| < \bar{c},
\end{align}
and for all $t\geq0, $see, e.g., \cite{book_khalil_2002}.
\end{remark}

The notions of string stability in Definitions~\ref{def_stringstab} and~\ref{def_stringstab_asymp} apply to autonomous interconnected systems of the form (\ref{eqn_cascsys}). However, in many practical situations, vehicles are subject to external disturbances. Therefore, the following non-autonomous platoon
\begin{align}
\begin{split}
\frac{\dd x_0}{\dd\theta} &= f(x_0,0,w_0), \\
\frac{\dd x_i}{\dd\theta} &= f(x_i,x_{i-1},w_i), \quad i\in\Ical_N
\end{split}\label{eqn_cascsys_dist}
\end{align}
is considered, where $w_i\in\R^m$, $i\in\Ical_N^0$, represent disturbances influencing the system. Moreover, $\theta$ is taken as the independent variable in (\ref{eqn_cascsys_dist}), which is motivated by the observation that vehicle dynamics can be expressed in either time domain or the spatial domain. Consequently, $\theta$ can either represent time $t$ or the spatial variable $s$. The latter case will be further explored in controller design in Section~\ref{sec_platooncontrol}, as it provides a convenient approach for the synthesis of controllers that track the delay-based spacing policy (\ref{eqn_constanttimegap}).

The following definition of disturbance string stability is introduced to address the effects of disturbances in interconnected systems of the form (\ref{eqn_cascsys_dist}).
\begin{definition}\label{def_diststringstab}
The platoon (\ref{eqn_cascsys_dist}) is said to be disturbance string stable if there exist functions $\bar{\beta}$ of class $\classKL$ and $\bar{\sigma}$ of class $\classKinf$ and constants $\bar{c}>0$, $\bar{c}_w>0$, such that, for any initial condition $x_i(\theta_0)$ and disturbance $w_i$, $i\in\Ical_N^0$, satisfying
\begin{align}
\sup_{i\in\Ical_N^0} |x_i(\theta_0)| < \bar{c}, \quad \sup_{i\in\Ical_N^0} \|w_i\|_{\infty} < \bar{c}_w,
\label{eqn_def_diststringstab_bounds}
\end{align}
the solution $x_i(\theta)$, $i\in\smash{\Ical_N^0}$, exists for all $\theta\geq\theta_0$ and satisfies
\begin{align}
\sup_{i\in\Ical_N^0} |x_i(\theta)|
&\leq \bar{\beta}\!\left(\sup_{i\in\Ical_N^0} |x_i(\theta_0)|,\theta-\theta_0\right) \nonumber\\
&\qquad+ \bar{\sigma}\!\left(\sup_{i\in\Ical_N^0} \|w_i\|_{\infty}^{[\theta_0,\theta]}\right), \;\; \forall N \in\N.
\label{eqn_def_diststringstab}
\end{align}
If $\bar{c}$ and $\bar{c}_w$ can be taken as $\bar{c} = \infty$, $\bar{c}_w=\infty$, then the platoon~(\ref{eqn_cascsys_dist}) is said to be globally disturbance string stable.
\end{definition}
In the absence of disturbances, the definition of disturbance string stability in Definition~\ref{def_diststringstab} is equivalent to the notion of asymptotic string stability in Definition~\ref{def_stringstab_asymp}. Moreover, it extends the definition of string stability in \cite{ploeg_2014} by allowing for disturbances on all vehicles rather than the lead vehicle only and explicitly captures the effects of initial conditions.

It is noted that condition (\ref{eqn_def_diststringstab}) in Definition~\ref{def_diststringstab} is required to hold for any string length $N\in\N$, rather than for fixed $N$ corresponding to the length of the platoon under consideration. The invariance of the bounds under the string length is an important property, as it guarantees that the notion of disturbance string stability is scalable and allows for the addition or removal of vehicles from a string without affecting stability (see also \cite{ploeg_2014,yadlapalli_2006}). In fact, it states that the state trajectories remain bounded for any $N\in\N$, which prohibits the amplification of disturbances as they propagate through the platoon.

The definition of disturbance string stability in Definition~\ref{def_diststringstab} is based on properties of the entire platoon. The following theorem allows for establishing disturbance string stability of the basis of \emph{local} properties and is the main result of this section.
\begin{theorem}\label{thm_localiss}
Consider the platoon (\ref{eqn_cascsys_dist}) and let each vehicle be input-to-state stable with respect to its inputs $x_{i-1}$ and $w_i$, i.e., there exist a function $\beta$ of class $\classKL$, functions $\gamma$ and $\sigma$ of class $\classKinf$ and constants $c>0$, $c_w>0$, such that trajectories $x_i$ satisfy
\begin{align}
|x_i(\theta)|
&\leq \beta\big(|x_i(\theta_0)|,\theta-\theta_0\big) + \gamma\!\left(\|x_{i-1}\|_{\infty}^{[\theta_0,\theta]}\right) \nonumber\\
&\qquad+ \sigma\!\left(\|w_i\|_{\infty}^{[\theta_0,\theta]}\right), \;\;\forall \theta\geq\theta_0,
\label{eqn_thm_localiss}
\end{align}
for any $|x_i(\theta_0)| < c$, $\|x_{i-1}\|_{\infty} < c$, $\|w_i\|_{\infty} < c_w$ and for all $i\in\Ical^0_N$ and $N\in\N$ (with $x_{i-1} = 0$ for $i=0$). If the function $\gamma$ satisfies $\gamma(r)\leq\bar{\gamma}r$ for all $r\geq0$ and for some $\bar{\gamma} < 1$, then the platoon (\ref{eqn_cascsys_dist}) is disturbance string stable. If, in addition, the function $\beta$ in (\ref{eqn_thm_localiss}) satisfies
\begin{align}
\beta(r,\omega s) \leq \frac{1}{\omega^q}\beta(r,s), \quad \forall\, r,s\geq0,
\label{eqn_thm_localiss_beta}
\end{align}
for all $\omega$, $0<\omega\leq1$ and some $q>0$, then the function $\bar{\beta}$ in (\ref{eqn_def_diststringstab}) can be taken of the form $\bar{\beta}(r,s) = c_{\beta}\beta(\alpha(r),s)$ for some constant $c_{\beta}>0$ and function $\alpha$ of class $\classKinf$. Finally, if $c$ and $c_w$ can be chosen as $c = \infty$, $c_w=\infty$, then the platoon (\ref{eqn_cascsys_dist}) is globally disturbance string stable.
\end{theorem}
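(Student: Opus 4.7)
The plan is to exploit the strict contraction $\bar\gamma<1$ so that the ISS estimates sum geometrically along the cascade, producing bounds uniform in the vehicle index $i$ and the platoon length $N$. Let $X_0\defl\sup_{i\in\Ical_N^0}|x_i(\theta_0)|$ and $W\defl\sup_{i\in\Ical_N^0}\|w_i\|_{\infty}^{[\theta_0,\theta]}$. Taking the supremum over $s\in[\theta_0,\theta]$ on both sides of~(\ref{eqn_thm_localiss}) and replacing $\gamma(r)$ by its linear majorant $\bar\gamma r$ yields the scalar recursion
\begin{align*}
\tilde Y_i \le \beta(X_0,0) + \bar\gamma\,\tilde Y_{i-1} + \sigma(W),
\end{align*}
where $\tilde Y_i\defl\|x_i\|_{\infty}^{[\theta_0,\theta]}$, with base case $\tilde Y_0\le\beta(X_0,0)+\sigma(W)$. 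Summing the resulting geometric series gives $\tilde Y_i\le B\defl(\beta(X_0,0)+\sigma(W))/(1-\bar\gamma)$ uniformly in $i$ and $N$, which delivers existence of solutions on $[\theta_0,\infty)$ and fixes the constants $\bar c,\bar c_w$ by requiring $B$ to stay inside the local ISS domain of~(\ref{eqn_thm_localiss}).

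To extract $\classKL$-decay under the scaling assumption~(\ref{eqn_thm_localiss_beta}), I use the fading-memory form of~(\ref{eqn_thm_localiss}), obtained by applying the ISS bound with initial time $\tau\in[\theta_0,\theta]$:
\begin{align*}
|x_i(\theta)|\le\beta(|x_i(\tau)|,\theta-\tau)+\bar\gamma\,\|x_{i-1}\|_{\infty}^{[\tau,\theta]}+\sigma(W).
\end{align*}
I proceed by induction on $i$ with hypothesis $|x_{i-1}(\theta)|\le C_{i-1}\beta(\alpha(X_0),\theta-\theta_0)+D_{i-1}(W)$. Pick $\tau=\theta_0+(1-\omega)(\theta-\theta_0)$ with $\omega\in(0,1-\bar\gamma^{1/q})$, which is nonempty since $\bar\gamma^{1/q}<1$. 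The uniform bound controls $|x_i(\tau)|\le B$, while the inductive hypothesis evaluated at $\tau$ controls $\|x_{i-1}\|_{\infty}^{[\tau,\theta]}\le C_{i-1}\beta(\alpha(X_0),(1-\omega)(\theta-\theta_0))+D_{i-1}(W)$. Invoking~(\ref{eqn_thm_localiss_beta}) with factors $\omega$ and $1-\omega$ rescales both $\beta$-terms into the common time $\theta-\theta_0$, and splitting $\beta(B,\cdot)$ via the elementary inequality $\beta(a+b,s)\le\beta(2a,s)+\beta(2b,s)$ separates the $X_0$- and $W$-content (with $\alpha$ chosen as a $\classKinf$ majorant of the identity and of $r\mapsto 2\beta(r,0)/(1-\bar\gamma)$). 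This gives the closed recursion $C_i=\omega^{-q}+\bar\gamma(1-\omega)^{-q}C_{i-1}$, whose contraction factor is strictly below one by the choice of $\omega$, so $C_i\le c_\beta\defl\omega^{-q}/(1-\bar\gamma(1-\omega)^{-q})$ uniformly in $i$. An analogous recursion for $D_i$ closes thanks to $\bar\gamma<1$, producing a $\classKinf$ function $\bar\sigma$ and delivering the announced form $\bar\beta(r,s)=c_\beta\beta(\alpha(r),s)$.

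The first part of the theorem, without the scaling assumption, follows by applying the same fading-memory iteration recursively along the cascade: each step peels off a factor $\bar\gamma$ in front of a $\beta$-decay term evaluated at a progressively shifted time and leaves a residual of order $\bar\gamma^k B$ after $k$ steps. Letting $k$ grow slowly with $\theta-\theta_0$ sends both the residual and the shifted decay arguments to their limits and produces a $\classKL$ bound (no longer of the explicit form). Globality when $c=c_w=\infty$ is immediate, since all the constructions above are preserved under a global local ISS estimate. I expect the main obstacle to be the quantitative balancing in the scaling step: $\omega$ must be chosen small enough that the shift price $\omega^{-q}$ inside $\beta(|x_i(\tau)|,\theta-\tau)$ is absorbable, but large enough that the contraction $\bar\gamma(1-\omega)^{-q}<1$ holds. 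The simultaneous satisfiability of these two requirements is precisely what $\bar\gamma<1$ together with $q>0$ guarantees, and it is the most delicate quantitative point of the argument.
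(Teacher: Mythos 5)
Your treatment of the second and third statements is correct and rests on the same mechanism as the paper's proof: uniform boundedness from the geometric series $\sum_j\bar{\gamma}^{\,i-j}$, followed by a splitting of the elapsed interval $[\theta_0,\theta]$ so that each application of (\ref{eqn_thm_localiss}) pays a rescaling price $\omega^{-q}$ or $(1-\omega)^{-q}$ on the $\beta$-term while gaining a factor $\bar{\gamma}$, with summability secured by choosing the split so that $\bar{\gamma}(1-\omega)^{-q}<1$. The paper organizes this as an explicit geometric partition $\{\vartheta_j^i\}$ of $[\theta_0,\theta]$ and one unrolled sum (its condition $\bar{\gamma}<\bar{\omega}^q<1$ is your $\bar{\gamma}<(1-\omega)^q$ with $\bar{\omega}=1-\omega$), whereas you run a one-step induction in $i$ with a fixed split ratio; these are equivalent. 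Your additive separation $\beta(a+b,s)\leq\beta(2a,s)+\beta(2b,0)$, used to extract the disturbance content from the first argument of the \classKL bound, is a legitimate and somewhat more elementary substitute for the paper's $\min$-construction $\kappa$ and the associated weak triangle inequality; both routes yield $\bar{\beta}(r,s)=c_{\beta}\beta(\alpha(r),s)$ with $\alpha$ of class \classKinf, and the choice of $\bar{c},\bar{c}_w$ to keep the uniform bound inside the local ISS domain matches the paper.

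The genuine weakness is your first statement, the case where $\beta$ need not satisfy (\ref{eqn_thm_localiss_beta}). There your argument is only a sketch: ``letting $k$ grow slowly with $\theta-\theta_0$'' must be turned into a concrete diagonal choice $k=k(\theta-\theta_0)$ together with a verification that the resulting majorant $\tfrac{1}{1-\bar{\gamma}}\beta\big(B,\tfrac{\theta-\theta_0}{k}\big)+\bar{\gamma}^{k}B$ is, after separating the $W$-dependence hidden in $B$, dominated by a function that is continuous, of class \classKL jointly in $(\sup_j|x_j(\theta_0)|,\theta-\theta_0)$, and independent of $i$ and $N$. This can be done, but it is precisely the delicate step you leave unproved, and it is the step on which uniformity over the platoon length hinges. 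The paper sidesteps it entirely: it introduces $\phi(r,s)\defl\sup_{\omega\in(0,1]}\omega^q\beta(r,\omega s)$ as in (\ref{eqn_thm_localiss_proof_phi}), shows that $\phi$ is of class \classKL, dominates $\beta$, and \emph{always} satisfies the homogeneity property (\ref{eqn_thm_localiss_beta}), and then runs the scaling argument with $\phi$ in place of $\beta$. Adopting this majorant would let your induction prove both statements at once, with the explicit form $\bar{\beta}=c_{\beta}\beta(\alpha(\cdot),\cdot)$ recovered exactly when (\ref{eqn_thm_localiss_beta}) forces $\phi=\beta$.
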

\begin{proof}
The proof is given in Appendix~\ref{app_thm_localiss_proof}.
\end{proof}
\begin{remark}
The result in Theorem~\ref{thm_localiss} ensures that the influence of the initial condition does not vanish arbitrarily slow. Roughly speaking, (\ref{eqn_thm_localiss_beta}) characterizes functions $\beta$ that have a convergence rate slower than $\theta^{-q}$ (for some $q>0$) and it is shown that for such $\beta$ the function $\bar{\beta}$ in (\ref{eqn_def_diststringstab}) has the same convergence rate. Even though an upper bound satisfying (\ref{eqn_thm_localiss_beta}) is used when $\beta$ has a faster convergence rate (see (\ref{eqn_thm_localiss_proof_phi}) in the proof in Appendix~\ref{app_thm_localiss_proof}), this ensures that that the function $\bar{\beta}$ in (\ref{eqn_def_diststringstab}) does not have arbitrarily slow convergence, which is not a priori obvious when the number of interconnected systems increases.
\end{remark}
\begin{remark}
At first sight it might be surprising that a proof of Theorem~\ref{thm_localiss} is required, as it is well-known that the cascade interconnection of input-to-state stable systems is itself input-to-state stable. However, this standard result in, e.g., \cite{sontag_1989,book_krstic_1995}, does not guarantee a priori that the class $\classKL$ and class $\classKinf$ functions that bound the behavior of the cascaded system remain bounded when the number of interconnected systems grows. For example, a cascade of (linear) systems $\dot{x}_{i} = -x_i + 2x_{i-1} + w_i$ is clearly input-to-state stable, but it can be shown that perturbations can grow unbounded as the number of subsystems $N$ grows (consider, e.g., the static behavior for $w_i = 1$ for all $i$). Theorem~\ref{thm_localiss} explicitly addresses this aspect.
\end{remark}

The result in Theorem~\ref{thm_localiss} deals with subsystems that are connected through their entire states $x_i$. However, a practically relevant case is given by systems of the form
\begin{align}
\begin{split}
\frac{\dd x_0}{\dd\theta} &= f(x_0,0,w_0), \\
\frac{\dd x_i}{\dd\theta} &= f(x_i,y_{i-1},w_i), \quad i\in\Ical_N,
\end{split}\label{eqn_cascsys_dist_output}
\end{align}
in which the interconnection is achieved through outputs $y_i\in\R^p$ defined as
\begin{align}
y_i &= h(x_i), \quad i\in\Ical_N^0.
\label{eqn_cascsys_dist_output_h}
\end{align}
The interconnection of the form (\ref{eqn_cascsys_dist_output})--(\ref{eqn_cascsys_dist_output_h}) can be studied by exploiting the notion of input-to-output stability \cite{sontag_2008}. This leads to the following theorem, which can be regarded as a counterpart of Theorem~\ref{thm_localiss} for systems with interconnection through the outputs.
\begin{theorem}\label{thm_localios}
Consider the platoon (\ref{eqn_cascsys_dist_output})--(\ref{eqn_cascsys_dist_output_h}) and let each vehicle be input-to-output stable with respect to its inputs $y_{i-1}$ and $w_i$, i.e., there exist a function $\beta_y$ of class $\classKL$, functions $\gamma_y$ and $\sigma_y$ of class $\classKinf$ and constants $c>0$, $c_w>0$, such that the outputs $y_i = h(x_i)$ satisfy
\begin{align}
|y_i(\theta)| &\leq \beta_y\big(|x_i(\theta_0)|,\theta-\theta_0\big)
+ \gamma_y\!\left(\|y_{i-1}\|_{\infty}^{[\theta_0,\theta]}\right) \nonumber\\
&\qquad + \sigma_y\!\left(\|w_i\|_{\infty}^{[\theta_0,\theta]}\right), \;\forall \theta\geq\theta_0,
\label{eqn_thm_localios}
\end{align}
for any $|x_i(\theta_0)| < c$, any $y_{i-1} = h(x_{i-1})$ with $\|x_{i-1}\|_{\infty}<c$, $\|w_i\|_{\infty}<c_w$ and for all $i\in\Ical_N^0$ and $N\in\N$ (with $y_{i-1} = 0$ for $i=0$).
Moreover, let each vehicle in (\ref{eqn_cascsys_dist_output}) be input-to-state stable with respect to the same inputs, i.e., there exist a function $\beta_x$ of class $\classKL$ and functions $\gamma_x$ and $\sigma_x$ of class $\classKinf$ such that
\begin{align}
|x_i(\theta)| &\leq \beta_x\big(|x_i(\theta_0)|,\theta-\theta_0\big)
+ \gamma_x\!\left(\|y_{i-1}\|_{\infty}^{[\theta_0,\theta]}\right) \nonumber\\
&\qquad+ \sigma_x\!\left(\|w_i\|_{\infty}^{[\theta_0,\theta]}\right), \;\forall \theta\geq\theta_0,
\label{eqn_thm_localios_iss}
\end{align}
for any $|x_i(\theta_0)| < c$, any $y_{i-1} = h(x_{i-1})$ with $\|x_{i-1}\|_{\infty}<c$, $\|w_i\|_{\infty}<c_w$ and for all $i\in\Ical_N^0$ and $N\in\N$.
If the function $\gamma_y$ satisfies $\gamma_y(r)\leq\bar{\gamma}r$ for all $r\geq0$ and for some $\bar{\gamma} < 1$, then the platoon (\ref{eqn_cascsys_dist_output})--(\ref{eqn_cascsys_dist_output_h}) is disturbance string stable.
\end{theorem}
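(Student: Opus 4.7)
The plan is to establish disturbance string stability in two stages. First, I would derive a uniform bound on $\sup_{i\in\Ical_N^0} |y_i(\theta)|$ by applying the cascade/small-gain mechanism of Theorem~\ref{thm_localiss} directly to the IOS inequality (\ref{eqn_thm_localios}). Second, I would feed this output bound into the ISS inequality (\ref{eqn_thm_localios_iss}) to obtain the state bound required by Definition~\ref{def_diststringstab}.

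For the first stage, I iterate (\ref{eqn_thm_localios}) backwards through the chain $i, i-1, \ldots, 0$. The small-gain condition $\gamma_y(r)\leq \bar{\gamma} r$ with $\bar{\gamma}<1$ ensures that the accumulated interconnection gain forms a geometric series bounded by $1/(1-\bar{\gamma})$ independently of $N$. Because this recursion is structurally the same as the one used to establish Theorem~\ref{thm_localiss}---the only cosmetic difference being that the decaying contribution $\beta_y(|x_i(\theta_0)|,\cdot)$ involves $|x_i(\theta_0)|$ rather than the interconnection variable---the same construction produces a $\classKL$ function $\bar{\beta}_y$, a $\classKinf$ function $\bar{\sigma}_y$, and constants $\bar{c},\bar{c}_w>0$, all independent of $N$, such that
\begin{align}
\sup_{i\in\Ical_N^0}|y_i(\theta)|
&\leq \bar{\beta}_y\!\left(\sup_{i\in\Ical_N^0}|x_i(\theta_0)|,\theta-\theta_0\right) \nonumber\\
&\qquad+ \bar{\sigma}_y\!\left(\sup_{i\in\Ical_N^0}\|w_i\|_{\infty}^{[\theta_0,\theta]}\right),
\end{align}
whenever the initial conditions and disturbances satisfy the associated smallness conditions.

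For the second stage, I substitute this output bound into the term $\gamma_x(\|y_{i-1}\|_{\infty}^{[\theta_0,\theta]})$ of (\ref{eqn_thm_localios_iss}). Using the standard decomposition $\gamma_x(a+b)\leq \gamma_x(2a)+\gamma_x(2b)$, the composition $\gamma_x\circ\bar{\beta}_y$ becomes a new class-$\classKL$ contribution and $\gamma_x\circ\bar{\sigma}_y$ a new class-$\classKinf$ contribution. Combining these with $\beta_x$ and $\sigma_x$ and taking the supremum over $i\in\Ical_N^0$ yields the required $\classKL$ function $\bar{\beta}$ and $\classKinf$ function $\bar{\sigma}$ in Definition~\ref{def_diststringstab}, after adjusting the smallness constants so that the ISS and IOS bounds both remain applicable.

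The main obstacle I anticipate is in the first stage, namely verifying that the cascade recursion underlying Theorem~\ref{thm_localiss} genuinely extends to the asymmetric IOS bound in which the decaying term depends on $|x_i(\theta_0)|$ rather than on $|y_i(\theta_0)|$. This works because the recursion only requires the decaying part at each stage to be uniformly majorized by a $\classKL$ function of the worst-case initial condition along the chain, and the substitution $|x_i(\theta_0)|\mapsto \sup_j|x_j(\theta_0)|$ preserves this structure. A secondary technicality is choosing $\bar{c},\bar{c}_w$ small enough that the IOS and ISS bounds hold simultaneously along the entire forward solution, which is handled by a standard bootstrap using the uniform output bound established in the first stage.
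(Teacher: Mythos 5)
Your proposal is correct and follows essentially the same route as the paper's proof: a cascade/small-gain recursion on the IOS bound to obtain a uniform, $N$-independent output estimate, followed by substitution into the ISS bound (using $\gamma_x(a+b)\le\gamma_x(2a)+\gamma_x(2b)$) to reach Definition~\ref{def_diststringstab}. The one point to make explicit is that the decaying $\classKL$ estimate in your first stage restarts at intermediate instants from the \emph{state} norms $|x_j(\vartheta_j^i)|$ (since $\beta_y$ takes the state, not the output, as its first argument), so the uniform state bound obtained from (\ref{eqn_thm_localios_iss}) must already be invoked inside stage one to majorize these restart values---exactly the role of the boundedness estimate the paper establishes before its $\classKL$ construction.
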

\begin{proof}
The proof can be found in Appendix~\ref{app_thm_localios_proof}.
\end{proof}
At first sight, the conditions in Theorem~\ref{thm_localios} seem more restrictive than those in Theorem~\ref{thm_localiss}, as input-to-state stability of the subsystems is required in both cases. However, the gain function $\gamma_y$ of the input-to-output stability property in (\ref{eqn_thm_localios}) is required to be bounded as $\gamma_y(r)\leq\bar{\gamma}r$ for some $\bar{\gamma}<1$, whereas the gain function $\gamma_x$ in (\ref{eqn_thm_localios_iss}) can be arbitrarily large. Thus, Theorem~\ref{thm_localios} shows that only the gain with respect to the interconnection variables is relevant in proving disturbance string stability. In addition, this input-to-output gain $\gamma_y$ is typically smaller than the input-to-state gain $\gamma_x$, providing less conservative results.
\begin{remark}
The condition (\ref{eqn_thm_localios_iss}) is required to provide a bound on the state trajectories whenever the interconnection variables $y_i$ remain bounded, which is of importance as disturbance string stability is defined on the basis of state trajectories. Here, it is remarked that input-to-state stability as in (\ref{eqn_thm_localios_iss}) can be implied by input-to-output stability as in (\ref{eqn_thm_localios}) when the subsystems satisfy observability properties that are relevant in the input-to-state stability framework. The notion of input/output-to-state stability in \cite{sontag_1997} (see also \cite{sontag_2008}) provides such a property.
\end{remark}

\section{Platoon control for disturbance string stability}\label{sec_platooncontrol}
Vehicle platoons are considered in this section and a class of controllers is synthesized that achieves tracking of the delay-based spacing policy (\ref{eqn_constanttimegap}) and guarantees disturbance string stability. Thereto, vehicle modeling is discussed in Section~\ref{sec_modeling}, before presenting controller design in Section~\ref{sec_controldesign}. The resulting closed-loop stability properties are analyzed in Section~\ref{sec_platoonstability}.

\subsection{Platoon modeling and objectives}\label{sec_modeling}
Consider a platoon of $N+1$ vehicles, in which each vehicle satisfies the longitudinal dynamics
\begin{align}
\begin{split}
\dot{s}_i(t) &= \tilde{h}(\xi_i(t)), \\
\dot{\xi}_i(t) &= \tilde{f}(\xi_i(t)) + \tilde{g}(\xi_i(t))u_i(t) + \tilde{p}(\xi_i(t))w_i(t),
\end{split}\label{eqn_platoonsys_time}
\end{align}
with $i\in\Ical_N^0$. Here, $s_i(t)\in\R$ denotes the position of vehicle $i$, such that the first equation in (\ref{eqn_platoonsys_time}) represents the kinematic relation with velocity $v_i(t)\defl\smash{\tilde{h}(\xi_i(t))}$. The second equation with state $\xi(t)\in\R^{n-1}$ is a general description of the remaining dynamics, which can include engine or drive train dynamics as well as low-level control systems. The input $u_i\in\R$ is available for the platoon control developed in this section, whereas $w_i\in\R^m$ is the unmeasurable external disturbance. It is assumed that the functions $\tilde{f}:\R^{n-1}\rightarrow\R^{n-1}$, $\tilde{g}:\R^{n-1}\rightarrow\R^{n-1}$, $\tilde{p}:\R^{n-1}\rightarrow\R^{(n-1)\times m}$ and $\tilde{h}:\R^{n-1}\rightarrow\R$ are sufficiently smooth.

The dynamics (\ref{eqn_platoonsys_time}) is taken to satisfy the following assumption, which simplifies the developments (see, e.g., \cite{book_khalil_2002} for a definition of relative degree).
\begin{assumption}\label{ass_relativedegree}
The dynamics (\ref{eqn_platoonsys_time}) with input $u_i$ has relative degree $n$ with respect to the output~$s_i$.
\end{assumption}
\begin{remark}
Vehicle models commonly considered in the analysis and control of vehicle platoons typically satisfy Assumption~\ref{ass_relativedegree}. Specifically, the second-order models used in \cite{levine_1966,melzer_1971} as well as third-order models (e.g., including actuator dynamics) considered in \cite{ioannou_1993,swaroop_1994,ploeg_2014b} are of the form~(\ref{eqn_platoonsys_time}) and satisfy this assumption.
\end{remark}
\begin{remark}
The disturbances $w_i$ in (\ref{eqn_platoonsys_time}) are taken as external disturbances, but they might as well result from modeling errors or parameter uncertainties. Moreover, even though it is assumed that all vehicles have identical dynamics (\ref{eqn_platoonsys_time}), the results in this paper have the potential to be extended to heterogeneous vehicle platoons. Namely, the disturbance~$w_i$ might be the result of model inhomogeneity rather than external influences.
\end{remark}

Motivated by the discussion in Section~\ref{sec_spacingpolicies}, a controller will be synthesized that, firstly, achieves the desired inter-vehicular spacing according to the delay-based policy (\ref{eqn_constanttimegap}), secondly, ensures tracking of a common velocity profile $\vref(\cdot)$ in space, and, thirdly, guarantees disturbance string stability with respect to this velocity profile. Here, it is recalled that the first two objectives are aligned for positive velocities according to Proposition~\ref{lem_timegap}. Therefore, the following assumption is made on the reference velocity.
\begin{assumption}\label{ass_vref}
The reference velocity $\vref(\cdot)$ satisfies $0< v_{\smin} \leq \vref(s)\leq v_{\smax}$ for all $s\geq0$ and for some constants $v_{\smin}$, $v_{\smax}$. Moreover, $\vref(\cdot)$ is at least $n-2$ times continuously differentiable.
\end{assumption}
\begin{remark}
In addition to allowing for expressing the spacing policy (\ref{eqn_constanttimegap}) in the spatial domain as will be used in the remainder of this paper, the assumption $v_{\smin} \leq \vref(s)$ for positive $v_{\smin}$ guarantees that the follower distance $d_i(t) = s_{i-1}(t) - s_i(t)$ remains positive as long as the reference velocity is perfectly tracked. In fact, the follower distance satisfies $v_{\smin}\Delta t \leq d_i(t) \leq v_{\smax}\Delta t$, providing a bound on the follower distance during maneuvers. Similarly, if the minimum velocity and nominal time gap $\Delta t$ are chosen such that $L_{\smax} \leq v_{\smin}\Delta t$, with $L_{\smax}$ the maximum vehicle length, subsequent vehicles do not collide when they perfectly track the reference velocity. It is recalled that the main benefit of platooning for (heavy-duty) vehicles, i.e., reduced aerodynamic drag, is only obtained for significantly large vehicle speeds, such that Assumption~\ref{ass_vref} is not too restrictive. In practice, the reference velocity profile $\vref(\cdot)$ should be designed such that the actuation constraints (i.e., bounds on traction force and braking capacity) of the vehicles are satisfied. Note that the fact that all vehicles track the same velocity profile in the spatial domain enables such design.
\end{remark}

Under the assumption that all vehicles have a positive velocity at all times, the delay-based spacing policy (\ref{eqn_constanttimegap}) can equivalently be expressed in the spatial domain. Thereto, let the space $s$ be the independent variable and denote $t_i(s)$ as the time instance at which vehicle $i$ passes $s$. Then, the spacing policy (\ref{eqn_constanttimegap}) can be represented as $\Delta_i(s) = 0$, where $\Delta_i$ denotes the deviation from the nominal time gap $\dt$ as
\begin{align}
\Delta_i(s) &= t_i(s) - t_{i-1}(s) - \dt, \label{eqn_Delta}\\
\Delta_i^0(s) &= t_i(s) - t_0(s) - i\dt, \label{eqn_Delta0}
\end{align}
for all $i\in\Ical_N$. Similarly, $\Delta_i^0$ represents the deviation from the nominal time gap with respect to the first vehicle in the platoon. As the characterization in (\ref{eqn_Delta}) does not require analysis of time-delay systems as suggested by (\ref{eqn_constanttimegap}), it is beneficial to consider controller synthesis in the spatial domain.

The vehicle dynamics (\ref{eqn_platoonsys_time}) can be written in spatial domain by exploiting the kinematic relation~(\ref{eqn_kinematics}), which leads to
\begin{align}
\begin{split}
\mathring{t}_i(s) &= h(\xi_i(s)), \\
\mathring{\xi}_i(s) &= f(\xi_i(s)) + g(\xi_i(s))u_i(s) + p(\xi_i(s))w_i(s),
\end{split}\label{eqn_platoonsys_space}
\end{align}
with $\mathring{x}(s)\defl\tfrac{\dd x}{\dd s}(s)$ denoting the derivative with respect to space and for all $i\in\Ical_N^0$. Moreover,
\begin{align}
h(\xi_i) &= \frac{1}{\tilde{h}(\xi_i)}, \;\;
f(\xi_i) = \frac{\tilde{f}(\xi_i)}{\tilde{h}(\xi_i)}, \nonumber\\
g(\xi_i) &= \frac{\tilde{g}(\xi_i)}{\tilde{h}(\xi_i)}, \;\;
p(\xi_i) = \frac{\tilde{p}(\xi_i)}{\tilde{h}(\xi_i)}.
\label{eqn_platoonsys_space_functions}
\end{align}
Contrary to the description in (\ref{eqn_platoonsys_time}), the disturbance $w_i$ is assumed to be specified in space in (\ref{eqn_platoonsys_space}). This does not pose any limitations as this disturbance will later be characterized by its norm $\|w_i\|_{\infty}$, which is independent from the choice of independent variable.

\subsection{Platoon controller design}\label{sec_controldesign}
The representation of the vehicle dynamics in the spatial domain (\ref{eqn_platoonsys_space}) will be exploited in the current section to design a class of controllers that achieve the desired objectives of tracking a (spatially-varying) reference velocity and the delay-based spacing policy while guaranteeing disturbance string stability. To enable controller design, the platoon of vehicles (\ref{eqn_platoonsys_space}) with spacing policy (\ref{eqn_Delta}) will be represented in time gap tracking error coordinates, which will be based on a representation of the vehicles in velocity tracking coordinates. Herein, an input-output linearization approach will be exploited.

In order to achieve tracking of the reference velocity $\vref(\cdot)$, the velocity tracking error $e_{1,i}$ as well as its space derivatives are defined, for any follower vehicle $i\in\Ical_N$, as
\begin{align}
e_{1,i}(s) &\defl h(\xi_i(s)) - \frac{1}{\vref(s)}, \label{eqn_veltrackerror_e1}\\
e_{k,i}(s) &\defl L_f^{k-1}h(\xi_i) - \frac{\dd^{k-1}}{\dd s^{k-1}}\!\left(\frac{1}{\vref(s)}\right), \label{eqn_veltrackerror_ek}
\end{align}
$k = 2,3,\ldots,n-1$, where it is recalled that $h(\xi_i(s)) = \tfrac{1}{v_i(s)}$ due to (\ref{eqn_platoonsys_space_functions}).
In (\ref{eqn_veltrackerror_ek}), the notation $L_fh(\xi)$ denotes the Lie derivative of $h$ along $f$ (albeit applied in spatial domain), see \cite{book_khalil_2002,book_nijmeijer_1990} for a definition.

By Assumption~\ref{ass_relativedegree}, there exists a controller
\begin{align}
u_i(s) &= \frac{1}{L_gL_f^{n-2}h(\xi_i)}\bigg( -L_f^{n-1}h(\xi_i) \nonumber\\
&\qquad\hspace{2cm}+ \frac{\dd^{n-1}}{\dd s^{n-1}}\!\left(\frac{1}{\vref(s)}\right) + \bar{u}_i(s) \bigg)
\label{eqn_inoutlin_ui}
\end{align}
that achieves input-output linearization of (\ref{eqn_platoonsys_space}) with respect to the output $h(\xi_i)$ and the virtual input $\bar{u}_i$, such that the dynamics of (\ref{eqn_platoonsys_space}) with (\ref{eqn_inoutlin_ui}) can be written as
\begin{align}
\begin{split}
\mathring{t}_i(s) &= e_{1,i}(s) + \frac{1}{\vref(s)}, \\
\mathring{e}_i(s) &= Ae_i(s) + B\bar{u}_i(s) + \rho(\xi_i)w_i(s),
\end{split}\label{eqn_platoonsys_space_te}
\end{align}
for any vehicle $i\in\Ical_N^0$. Here, the linear dynamics for $e_i = [\begin{array}{cccc} e_{1,i} & \cdots &  e_{n-1,i}\end{array}]^{\T}$ is characterized by the matrices
\begin{align}
A = \left[\begin{array}{cccc}
0 & 1 &   & 0 \\
  & \smash{\ddots} & \smash{\ddots} & \\
  && 0 & 1 \\
0 & & & 0
\end{array}\right], \quad
B = \left[\begin{array}{c} 0 \\ \smash{\vdots} \\ 0 \\ 1 \end{array}\right],
\label{eqn_platoonsys_space_matrices}
\end{align}
whereas the disturbance $w_i$ influences the dynamics (\ref{eqn_platoonsys_space_te}) through the function $\rho = [\begin{array}{ccc} \rho_1^{\T} & \cdots & \rho_{n-1}^{\T} \end{array}]^{\T}$ defined as
\begin{align}
\rho_k(\xi_i) = L_pL_f^{k-1}h(\xi_i), \quad k = 1,2,\ldots,n-1.
\label{eqn_rhok}
\end{align}
In (\ref{eqn_rhok}), the argument $\xi_i$ is maintained for ease of notation, but it is noted that $\xi_i$ can be related to the states $e_i$ and reference velocity $\vref$ through (\ref{eqn_veltrackerror_e1})--(\ref{eqn_veltrackerror_ek}).

Based on the velocity tracking error $e_{1,i}$ in (\ref{eqn_veltrackerror_e1}), a characterization of the required spacing policy for follower vehicle $i\in\Ical_N$ is introduced by defining the time gap tracking error
\begin{align}
\delta_{1,i}(s) \defl (1-\kappa_0)\Delta_i(s) + \kappa_0\Delta_i^0(s) + \kappa e_{1,i}(s),
\label{eqn_delta1}
\end{align}
with $0\leq \kappa_0 < 1$ and $\kappa>0$ and where $\Delta_i$ and $\Delta_i^0$ are defined in (\ref{eqn_Delta}) and (\ref{eqn_Delta0}), respectively. It can be observed that $\delta_{1,i}$ presents a weighted combination of the timing error with respect to the preceding vehicle and the first vehicle in the platoon. Moreover, the additional term $\kappa e_{1,i}$ allows for the relaxation of the spacing policy when the vehicle (with index $i$) does not perfectly track the desired velocity reference (see (\ref{eqn_veltrackerror_e1})) and will be shown to ensure damping of perturbations similar to the case of a constant headway strategy in (\ref{eqn_constantheadway}). Namely, the inclusion of this term induces the dynamics
\begin{align}
\kappa\mathring{\Delta}_i(s) = -\Delta_i(s) + \delta_{1,i}(s) - \kappa_0\Delta_{i-1}^0(s) - \kappa e_{1,i-1},
\label{eqn_Delta_dynamics}
\end{align}
as can be observed by noting that $\mathring{\Delta}_i = e_{1,i}-e_{1,i-1}$ and $\Delta_i^0 = \Delta_i + \Delta_{i-1}^0$ (see (\ref{eqn_Delta})--(\ref{eqn_Delta0})). For later reference, the terms dependent on the preceding vehicle (with index $i-1$) are collected, for any $i\in\Ical_N^0$, as
\begin{align}
y_i(s) \defl -\kappa_0\Delta_i^0(s) - \kappa e_{1,i}(s).
\label{eqn_output_yi}
\end{align}

Returning to the definition of $\delta_{1,i}$ in (\ref{eqn_delta1}), additional time gap tracking error coordinates $\delta_{k,i}$ are defined accordingly~as
\begin{align}
\delta_{k,i}(s) &= (1-\kappa_0)(e_{k-1,i} - e_{k-1,i-1}) \nonumber\\
&\qquad+ \kappa_0(e_{k-1,i} - e_{k-1,0}) + \kappa e_{k,i},
\label{eqn_deltak}
\end{align}
where $k = 2,3,\ldots,n-1$ and for $i\in\Ical_N$. Then, the platoon dynamics can be equivalently represented in the timing error coordinates $x_i = [\begin{array}{cc} \Delta_i & \delta_i^{\T} \end{array}]^{\T}$, where $\Delta_i$ represents the desired delay-based spacing policy as in (\ref{eqn_Delta}) and $\delta_i = [\begin{array}{cccc} \delta_{1,i} & \cdots & \delta_{n-1,i}\end{array}]^{\T}$ is given through (\ref{eqn_delta1}) and (\ref{eqn_deltak}). In particular, after introducing the new virtual input $\tilde{u}_i$ by substituting
\begin{align}
\bar{u}_i(s) &= -\kappa^{-1}(1-\kappa_0)(e_{n-1,i} - e_{n-1,i-1}) \nonumber\\
&\qquad- \kappa^{-1}\kappa_0(e_{n-1,i} - e_{n-1,0}) + \tilde{u}_i(s),
\label{eqn_input_ubar}
\end{align}
into (\ref{eqn_platoonsys_space_te}), it can be shown that the dynamics of the follower vehicles $i\in\Ical_N$ in timing error coordinates $x_i$ takes the form
\begin{align}
\begin{split}
\mathring{x}_i(s) &= F\big(x_i(s),\tilde{u}_i(s),y_{i-1}(s), \bar{\rho}(\xi_i,\xi_{i-1},\xi_0)\bar{w}_i(s)\big), \\
y_i(s) &= H(x_i(s)),
\end{split}\label{eqn_platoonsys_space_x}
\end{align}
with $y_i$ as in (\ref{eqn_output_yi}). By recalling the dynamics for $\Delta_i$ in (\ref{eqn_Delta_dynamics}) and by exploiting the definitions (\ref{eqn_delta1}), (\ref{eqn_deltak}) as well as the dynamics (\ref{eqn_platoonsys_space_te}), it follows that the vector field $F$ is given as
\begin{align}
F(x_i,\tilde{u}_i,y_{i-1},\omega_i) = \left[\begin{array}{c}
\kappa^{-1}(-\Delta_i + \delta_{1,i} + y_{i-1}) \\
A\delta_i + \kappa B\tilde{u}_i + \omega_i
\end{array}\right],
\label{eqn_platoonsys_space_x_F}
\end{align}
whereas the use of (\ref{eqn_delta1}) and (\ref{eqn_output_yi}) yields the output equation
\begin{align}
H(x_i) = (1-\kappa_0)\Delta_i - \delta_{1,i}. \label{eqn_platoonsys_space_x_H}
\end{align}
Finally, the rows $\bar{\rho}_{k}$ in the matrix-valued function $\bar{\rho}$ in (\ref{eqn_platoonsys_space_x}) can be obtained through the dynamics for $\delta_i$ and the definition (\ref{eqn_rhok}), leading to
\begin{align}
\bar{\rho}_{1}(\xi_i,\xi_{i-1},\xi_0) &= \left[\begin{array}{c}
\kappa\rho_1^{\T}(\xi_i) \\ 0  \\ 0 \end{array}\right]^{\T},
\label{eqn_rhobar1}\\
\bar{\rho}_{k}(\xi_i,\xi_{i-1},\xi_0) &= \left[\begin{array}{c}
\kappa \rho_{k}^{\T}(\xi_i) + \rho_{k-1}^{\T}(\xi_i) \\
(\kappa_0-1)\rho_{k-1}^{\T}(\xi_{i-1}) \\
-\kappa_0\rho_{k-1}^{\T}(\xi_0) \end{array}\right]^{\T}, \label{eqn_rhobark}
\end{align}
with $k = 2,\ldots,n-1$.
Here, it can be observed that the definition of the spacing policy $\delta_{1,i}$ in (\ref{eqn_delta1}) implies that the disturbances on both the preceding vehicle and first vehicle in the platoon affect the timing error of vehicle $i$. As a result, $\bar{w}_i$ in (\ref{eqn_platoonsys_space_x}) is defined as $\bar{w}_i = [\begin{array}{ccc} w_i^{\T} & w_{i-1}^{\T} & w_0^{\T} \end{array}]^{\T}$.
\begin{remark}\label{rem_leadvehicle}
The timing errors $\Delta_i$ and $\Delta_i^0$ in (\ref{eqn_Delta}) and (\ref{eqn_Delta0}), respectively, as well as $\delta_i$ in (\ref{eqn_delta1}), (\ref{eqn_deltak}) are not defined for the lead vehicle with index $i=0$. Instead, take $\Delta_0 \defl t_0 - \int\vref^{-1}\dd s$ as the deviation from a nominal trajectory and let $\Delta_0^0 \defl \Delta_0$. Then, $\delta_{1,0}$ can be defined according to~(\ref{eqn_delta1}) as $\delta_{1,0} = \Delta_0(s) + \kappa e_{1,0}$. Similarly, $\delta_{k,0} = e_{k-1,0} + \kappa e_{k,0}$. It then follows from the dynamics~(\ref{eqn_platoonsys_space_te}) that the first vehicle in the platoon satisfies
\begin{align}
\begin{split}
\mathring{x}_0(s) &= F\big(x_0(s),\tilde{u}_0(s),0,\bar{\rho}(\xi_0,0,0)\bar{w}_0(s)\big), \\
y_0(s) &= H(x_0(s)),
\end{split}\label{eqn_platoonsys_space_x_i=0}
\end{align}
with $F$ and $H$ as in (\ref{eqn_platoonsys_space_x_F}) and (\ref{eqn_platoonsys_space_x_H}), respectively. In (\ref{eqn_platoonsys_space_x_i=0}), (\ref{eqn_output_yi}) is used for $i=0$ to obtain the latter equation and $\bar{w}_0 = [\begin{array}{ccc} w_0^{\T} & 0 & 0\end{array}]^{\T}$. It is clear that the dynamics (\ref{eqn_platoonsys_space_x_i=0}) is of the same form as that of the follower vehicles in (\ref{eqn_platoonsys_space_x}).
\end{remark}

Since the objective is to achieve the desired spacing policy for vehicle $i$ by the design of a controller that stabilizes $\delta_{1,i}=0$, the subspace $\Scal_i$ is introduced as
\begin{align}
\Scal_i \defl \big\{ x\in\R^{(N+1)n} \;\big|\; \delta_{i} = 0 \big\}, \;\; i\in\Ical_N^0.
\label{eqn_Scal}
\end{align}
Here, $x = [\begin{array}{cccc} x_0^{\T} & x_1^{\T} & \cdots & x_N^{\T}\end{array}]^{\T}$ is the state of the platoon, where it is recalled that $x_i = [\begin{array}{cc} \Delta_i & \delta_i^{\T} \end{array}]^{\T}$ satisfies the dynamics (\ref{eqn_platoonsys_space_x})--(\ref{eqn_platoonsys_space_x_H}). In order to render $\Scal_i$ positively invariant in the absence of disturbances, a controller $\tilde{u}_i = k(x_i)$ (i.e., a decentralized controller) is sought that achieves input-to-state stability with respect to the set $\Scal_i$ as in (\ref{eqn_Scal}), i.e., there exist functions $\beta_{\delta}$ of class $\classKL$ and $\sigma_{\delta}$ of class $\classKinf$ such that the controlled system satisfies
\begin{align}
|x(s)|_{\Scal_i} \leq \beta_{\delta}\big(|x(0)|_{\Scal_i},s-s_0\big) + \sigma_{\delta}\!\left(\|\bar{w}_i\|_{\infty}^{[s_0,s]}\right),
\label{eqn_lem_control_setiss}
\end{align}
where $|x|_{\Scal_i} \defl \inf_{z\in\Scal_i}|x - z|$ represents the distance to $\Scal_i$.

After introducing the set $\Xcal_c\ni0$ as
\begin{align}
\textstyle
\Xcal_{c} \defl \big\{ x\in\R^{(N+1)n} \;\big|\; \sup_{j\in\Ical_N^0}|x_j| \leq c \big\},
\label{eqn_Xcal}
\end{align}
it can be shown that a controller exists that locally achieves~(\ref{eqn_lem_control_setiss}).
\begin{lemma}\label{lem_control}
Consider the platoon dynamics (\ref{eqn_platoonsys_space_x})--(\ref{eqn_platoonsys_space_x_H}) resulting from the vehicle dynamics (\ref{eqn_platoonsys_space}) and the spacing policy (\ref{eqn_delta1}) and let $\tilde{u}_i = K\delta_i$ be a feedback controller in which $K$ is chosen such that the matrix $A + \kappa BK$ is Hurwitz, with $A$ and $B$ as in (\ref{eqn_platoonsys_space_matrices}). Then, there exists a constant $\bar{c}_{\delta}>0$ such that, for any trajectory $x(\cdot)$ that satisfies $x(s)\in\Xcal_{\bar{c}_{\delta}}$ for all $s\geq0$, (\ref{eqn_lem_control_setiss}) holds.
\end{lemma}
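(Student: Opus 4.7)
The plan has three steps: (i) substitute $\tilde u_i=K\delta_i$ into the $\delta_i$-block of the dynamics to obtain a Hurwitz linear system perturbed by a state-dependent disturbance; (ii) bound that disturbance uniformly along every trajectory remaining in $\Xcal_{\bar c_\delta}$; and (iii) apply the variation-of-constants estimate for exponentially stable linear systems.

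Step (i) follows directly from (\ref{eqn_platoonsys_space_x_F}): the feedback gives
\begin{equation*}
\mathring\delta_i(s)=(A+\kappa BK)\delta_i(s)+\bar\rho\bigl(\xi_i(s),\xi_{i-1}(s),\xi_0(s)\bigr)\bar w_i(s),
\end{equation*}
and since $A+\kappa BK$ is Hurwitz there exist constants $M\geq 1$ and $\lambda>0$ with $\|e^{(A+\kappa BK)\tau}\|\leq Me^{-\lambda\tau}$ for all $\tau\geq 0$. Moreover, $\Scal_i=\{\delta_i=0\}$ leaves all other coordinates free, so $|x|_{\Scal_i}=|\delta_i|$ and the target bound (\ref{eqn_lem_control_setiss}) reduces to a standard ISS estimate on this $\delta_i$-subsystem.

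For step (ii) I would exploit smoothness of $\tilde f,\tilde g,\tilde p,\tilde h$ together with Assumption~\ref{ass_relativedegree} to obtain that the map $\xi_j\mapsto e_j$ induced by (\ref{eqn_veltrackerror_e1})--(\ref{eqn_veltrackerror_ek}) is a local diffeomorphism at every point where $h(\xi)=1/\vref(s)$, uniformly in $s$ by Assumption~\ref{ass_vref}. For the lead vehicle, the triangular linear relations $\delta_{1,0}=\Delta_0+\kappa e_{1,0}$ and $\delta_{k,0}=e_{k-1,0}+\kappa e_{k,0}$ from Remark~\ref{rem_leadvehicle} invertibly express $e_0$ in terms of $x_0$, so taking $\bar c_\delta$ small confines $\xi_0$ to a fixed compact neighborhood of the nominal trajectory. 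An analogous inversion of (\ref{eqn_delta1}) and (\ref{eqn_deltak}) recovers $e_{i-1}$ and $e_i$, after which continuity of $\bar\rho$ on a compact set produces a uniform bound $|\bar\rho|\leq L$.

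Step (iii) is then routine: variation of constants combined with the uniform bound on $\bar\rho$ yields
\begin{equation*}
|\delta_i(s)|\leq Me^{-\lambda(s-s_0)}|\delta_i(s_0)|+\tfrac{ML}{\lambda}\|\bar w_i\|_\infty^{[s_0,s]},
\end{equation*}
which is (\ref{eqn_lem_control_setiss}) with $\beta_\delta(r,\tau)=Me^{-\lambda\tau}r$ and $\sigma_\delta(r)=(ML/\lambda)r$. The main obstacle is step (ii), and specifically the goal of making $\bar c_\delta$ independent of the platoon length $N$: inverting (\ref{eqn_delta1}) at index $i$ introduces $\Delta_i^0=\sum_{j=1}^i\Delta_j$, whose crude estimate $i\bar c_\delta$ scales with $i$. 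Circumventing this will require either reading the hypothesis $x\in\Xcal_{\bar c_\delta}$ as also constraining the auxiliary variables $\Delta_i^0$ through the platoon-wide state, or exploiting the specific structure of $\bar\rho_k$ in (\ref{eqn_rhobar1})--(\ref{eqn_rhobark}) to argue that only combinations of $\xi_j$ bounded by $|x_j|$ alone enter the final estimate.
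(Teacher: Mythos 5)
Your route is essentially the paper's: substitute the feedback to get the Hurwitz closed-loop $\delta_i$-dynamics driven by $\bar{\rho}\bar{w}_i$, note $|x|_{\Scal_i}=|\delta_i|$, bound $\bar{\rho}$ uniformly on the compact set $\Xcal_{\bar{c}_{\delta}}$ via smoothness of the functions in (\ref{eqn_platoonsys_space_functions}) (which requires $v_i>0$, secured by taking $\bar{c}_{\delta}$ small and invoking Assumption~\ref{ass_vref}), and conclude input-to-state stability with respect to $\Scal_i$. The only difference is the last step: the paper uses a quadratic Lyapunov function $V=\delta_i^{\T}P\delta_i$ with $(A+\kappa BK)^{\T}P+P(A+\kappa BK)\prec -I$ and cites \cite{lin_1995}, whereas you use variation of constants; for a linear Hurwitz system these are interchangeable, and your version has the minor advantage of producing $\beta_{\delta}$ and $\sigma_{\delta}$ explicitly.

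The obstacle you isolate in step (ii) is the one genuinely delicate point, and you should know that the paper's proof does not resolve it more carefully than you do: it asserts in one line that $x\in\Xcal_{\bar{c}_{\delta}}$ bounds the tracking errors $e_{1,i}$ ``as follows from their relation to the state in (\ref{eqn_delta1})''. Since $\Delta_i^0=\sum_{j=1}^{i}\Delta_j$ is not a component of $x_i$, the literal hypothesis $\sup_j|x_j|\leq\bar{c}_{\delta}$ only gives $|\Delta_i^0|\leq i\bar{c}_{\delta}$ when $\kappa_0>0$, so the constant $c_{\rho}$ bounding $\bar{\rho}$ (and hence $\sigma_{\delta}$, and $\bar{c}_{\delta}$ itself) would degrade with $i$; this matters because Theorem~\ref{thm_diststringstab} later needs these quantities uniform in $N$. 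The paper thus implicitly adopts your first proposed reading, that the hypothesis confines the tracking errors (equivalently $\Delta_i^0$) uniformly. Your second proposed fix does not work as stated: the rows (\ref{eqn_rhobar1})--(\ref{eqn_rhobark}) involve $\rho_k(\xi_i)$, $\rho_{k-1}(\xi_{i-1})$, and $\rho_{k-1}(\xi_0)$ separately, so each $\xi_j$ must be confined individually, not merely combinations of them. In short, your proof is as complete as the paper's, and your diagnosis of where the residual work lies is accurate.
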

\begin{proof}
The proof can be found in Appendix~\ref{app_lem_control_proof}.
\end{proof}
\begin{remark}
In the absence of disturbances, the condition (\ref{eqn_lem_control_setiss}) implies that the set $\Scal_i$ is controlled invariant. Also, it is remarked that the invariance of $\Scal_i$ is independent of the control input for other vehicles (with index different from $i$), which is the result of the choice of $\delta_{1,i}$ in (\ref{eqn_delta1}). This choice also directly determines the dynamics on the invariant set $\Scal_i$, which is given by the first equation in (\ref{eqn_platoonsys_space_x}).
\end{remark}
\begin{remark}
The controller for vehicle $i\in\Ical_N$ given by (\ref{eqn_inoutlin_ui}), (\ref{eqn_input_ubar}), and $\tilde{u}_i = K\delta_i$ as obtained through Lemma~\ref{lem_control} relies on state information from the preceding vehicle (with index $i-1$) and, if $\kappa_0>0$, from the lead vehicle with index $0$. In particular, this information is required for a given position~$s$. As the lead vehicle and preceding vehicle pass some time before vehicle~$i$, this control approach is inherently robust to (small) time-delays in wireless communication, which is typically used to share this information.
\end{remark}
\begin{remark}
Even though the controller designed in this section is specified in the spatial domain, that does not prohibit the practical implementation of such controller in the time domain. To illustrate this, consider the computation of the timing error $\Delta_i(s)$ in (\ref{eqn_Delta}). Consider a vehicle $i$ and let $s_i(t)$ be its current position (specified in the time domain). Similarly, let $s_{i-1}(\cdot)$ be the historical evolution of the position of the preceding vehicle that can be obtained through wireless communication or from radar measurements and of which a sampled version can be stored onboard vehicle $i$ with limited memory. Then, the timing error $\Delta_i(s)$ (for $s = s_i(t)$) can be obtained by (numerically) solving the implicit equation $s_{i-1}(t - \Delta t + \Delta_i(s)) = s_{i}(t)$. 
Next, if $\tilde{\xi}_i(t)$ represents the current state of vehicle $i$ (again specified in the time domain), then $\xi_i(s) = \tilde{\xi}_i(t)$ with $s = s_i(t)$ and the state $\xi_{i-1}(s)$ of the preceding vehicle at the same point in space can be obtained from a time-domain specification $\tilde{\xi}_{i-1}(\cdot)$ of its state as $\xi_{i-1}(s) = \tilde{\xi}_{i-1}(t - \Delta t + \Delta_i(s))$. Finally, it is remarked that the controller synthesis procedure of this section is constructive, allowing, in principle, for practical implementation. Future work will focus on this aspect.
\end{remark}

\subsection{Platoon disturbance string stability analysis}\label{sec_platoonstability}
The application of any controller that achieves (\ref{eqn_lem_control_setiss}) leads to a controlled platoon that is disturbance string stable when leader information is exploited, i.e., when $\kappa_0>0$ in (\ref{eqn_delta1}). This is formalized for feedback controllers of the form $\tilde{u}_i = K\delta_i$ in the following theorem.
\begin{theorem}\label{thm_diststringstab}
Consider the platoon dynamics (\ref{eqn_platoonsys_space_x})--(\ref{eqn_platoonsys_space_x_H}) resulting from the vehicle dynamics (\ref{eqn_platoonsys_space}) and the spacing policy (\ref{eqn_delta1}) and let $\tilde{u}_i = K\delta_i$, $i\in\Ical_N^0$, be a feedback controller in which $K$ is chosen such that the matrix $A + \kappa BK$ is Hurwitz, with $A$ and $B$ as in (\ref{eqn_platoonsys_space_matrices}). Then, the closed-loop platoon system is disturbance string stable if $\kappa_0 > 0$.
\end{theorem}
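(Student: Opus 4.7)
The plan is to verify the hypotheses of Theorem~\ref{thm_localios} for the closed-loop subsystem (\ref{eqn_platoonsys_space_x})--(\ref{eqn_platoonsys_space_x_H}) with $\tilde u_i = K\delta_i$, using the output $y_i = (1-\kappa_0)\Delta_i - \delta_{1,i}$ as the interconnection signal. The crucial observation is that when $\kappa_0 > 0$, the asymptotic gain from $y_{i-1}$ to $y_i$ on the controlled-invariant set $\Scal_i$ equals $1-\kappa_0 < 1$, which is exactly the small-gain condition required by Theorem~\ref{thm_localios}.

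First I would establish the input-to-state stability bound (\ref{eqn_thm_localios_iss}) for the closed-loop vehicle. Lemma~\ref{lem_control} already gives an ISS-to-a-set estimate that bounds $|\delta_i(s)|$ (which coincides with $|x_i(s)|_{\Scal_i}$) by a $\classKL$ function of the initial condition plus a $\classKinf$ function of $\|\bar w_i\|_\infty$. Combining this with the scalar linear dynamics $\kappa\mathring\Delta_i = -\Delta_i + \delta_{1,i} + y_{i-1}$ from (\ref{eqn_Delta_dynamics}), which is ISS with unit gain in both inputs, yields a bound of the form (\ref{eqn_thm_localios_iss}) for the full state $x_i = [\Delta_i,\delta_i^{\T}]^{\T}$, with inputs $y_{i-1}$ and $\bar w_i$.

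Second, I would derive the input-to-output bound (\ref{eqn_thm_localios}) with linear gain strictly less than one. Integrating the linear first-order equation for $\Delta_i$ explicitly gives
\begin{align*}
|\Delta_i(s)| \leq e^{-(s-\theta_0)/\kappa}|\Delta_i(\theta_0)| + \|\delta_{1,i}\|_{\infty}^{[\theta_0,s]} + \|y_{i-1}\|_{\infty}^{[\theta_0,s]},
\end{align*}
and substituting into $y_i = (1-\kappa_0)\Delta_i - \delta_{1,i}$ leads to
\begin{align*}
|y_i(s)| &\leq (1-\kappa_0)e^{-(s-\theta_0)/\kappa}|\Delta_i(\theta_0)| \\
&\quad + (2-\kappa_0)\|\delta_{1,i}\|_{\infty}^{[\theta_0,s]} + (1-\kappa_0)\|y_{i-1}\|_{\infty}^{[\theta_0,s]}.
\end{align*}
Using Lemma~\ref{lem_control} to bound $\|\delta_{1,i}\|_{\infty}$ by the $\classKL$-plus-$\classKinf$ estimate in initial condition and $\bar w_i$, this takes precisely the form (\ref{eqn_thm_localios}) with $\gamma_y(r) = (1-\kappa_0)\,r$, so $\bar\gamma = 1-\kappa_0 < 1$ exactly when $\kappa_0 > 0$. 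The lead-vehicle subsystem (\ref{eqn_platoonsys_space_x_i=0}) is handled identically with $y_{-1}\equiv 0$.

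Finally, Theorem~\ref{thm_localios} delivers disturbance string stability in terms of $\sup_i\|\bar w_i\|_\infty$; since $\bar w_i$ stacks $w_i$, $w_{i-1}$, and $w_0$, the elementary bound $\sup_i\|\bar w_i\|_\infty \leq \sqrt{3}\,\sup_i\|w_i\|_\infty$ lets me absorb the extra factor into the class $\classKinf$ function $\bar\sigma$ of Definition~\ref{def_diststringstab}. The main obstacle I anticipate is bookkeeping in Step~2: ensuring the gain from $y_{i-1}$ to $y_i$ is genuinely \emph{linear} with coefficient $1-\kappa_0$ (and not a larger $\classKinf$ gain) requires exploiting the linearity of the $\Delta_i$ dynamics together with the exact form of the output $H$, and requires choosing the local constants $\bar c,\bar c_w$ in Definition~\ref{def_diststringstab} sufficiently small so that trajectories remain inside the neighborhood $\Xcal_{\bar c_\delta}$ on which Lemma~\ref{lem_control} applies.
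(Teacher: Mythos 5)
Your overall route is the same as the paper's: establish, for each closed-loop vehicle, input-to-output stability from $y_{i-1}$ to $y_i$ with \emph{linear} gain $1-\kappa_0$ together with input-to-state stability of $x_i=[\begin{array}{cc}\Delta_i & \delta_i^{\T}\end{array}]^{\T}$, invoke Theorem~\ref{thm_localios}, and then choose the local constants so that trajectories remain in $\Xcal_{\bar c_{\delta}}$ where Lemma~\ref{lem_control} applies. You also correctly identify the output structure $H(x_i)=(1-\kappa_0)\Delta_i-\delta_{1,i}$ as the sole source of the small-gain condition, and your explicit handling of $\bar w_i$ versus $w_i$ is a detail the paper leaves implicit.

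There is, however, one step that as written does not deliver the form required by Theorem~\ref{thm_localios}. When you bound the variation-of-constants integral by $\|\delta_{1,i}\|_{\infty}^{[\theta_0,s]}$ and then invoke Lemma~\ref{lem_control}, the initial-condition contribution becomes $(2-\kappa_0)\beta_{\delta}\big(|x(\theta_0)|_{\Scal_i},0\big)$, a constant in $s$, rather than a class-$\classKL$ function of the elapsed variable $s-\theta_0$. Consequently your estimates for $|y_i(s)|$ and $|x_i(s)|$ contain a non-decaying term and are \emph{not} of the form (\ref{eqn_thm_localios}) or (\ref{eqn_thm_localios_iss}), so Theorem~\ref{thm_localios} cannot be applied directly. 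The repair is to retain the convolution
\begin{align*}
\int_{\theta_0}^{s}\kappa^{-1}e^{-\kappa^{-1}(s-\vartheta)}\,|\delta_{1,i}(\vartheta)|\di\vartheta
\end{align*}
and use the standard fact that an exponentially decaying kernel convolved with a signal bounded by $\beta_{\delta}(\cdot,\vartheta-\theta_0)+\sigma_{\delta}(\cdot)$ again yields a class-$\classKL$ term plus a class-$\classKinf$ term; this is precisely how the paper obtains its bound $\beta_{\Delta}\big(|x_i(s_0)|,s-s_0\big)$ in the proof of this theorem. Only the $y_{i-1}$ term may safely be replaced by its sup norm, since that is where the gain $1-\kappa_0$ must survive intact. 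With that repair your argument coincides with the paper's proof.
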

\begin{proof}
The proof is given in Appendix~\ref{app_thm_diststringstab_proof}.
\end{proof}

The proof of Theorem~\ref{thm_diststringstab} shows that (\ref{eqn_thm_localios}) holds with $\gamma_y(r) = (1-\kappa_0)r$ and then employs Theorem~\ref{thm_localios} to guarantee disturbance string stability. As the the function $\gamma_y$ is only dependent on $\kappa_0$, it is clear that the result in Theorem~\ref{thm_diststringstab} is independent of the specific controller design. Instead, the result holds for any controller that achieves (\ref{eqn_lem_control_setiss}) (i.e., that renders $\Scal_i$ controlled invariant), indicating that the disturbance string stability property is a result of the choice of the spacing policy $\delta_{1,i}$ in (\ref{eqn_delta1}) rather than the specific controller. It is also noted that Theorem~\ref{thm_diststringstab} shows local disturbance string stability. Global string stability can not be shown as the vehicle velocities $v_i = \tilde{h}(\xi_i)$ need to be strictly positive to ensure that the dynamics (\ref{eqn_platoonsys_space}) in the spatial domain is well-defined, see (\ref{eqn_platoonsys_space_functions}).
\begin{remark}\label{rem_timedomain}
The controller design achieving disturbance string stability discussed in Lemma~\ref{lem_control} and Theorem~\ref{thm_diststringstab} is done in the spatial domain in order to obtain a delay-independent analysis of the delay-based spacing policy (\ref{eqn_constanttimegap}). Nonetheless, the results obtained in this section are directly applicable to other spacing policies when the vehicle dynamics (\ref{eqn_platoonsys_time}) is considered in time domain. Namely, the spacing errors can be defined as
\begin{align}
\Delta_i(t) &\defl s_i(t) - s_{i-1}(t) + d, \nonumber\\
\Delta_i^0(t) &\defl s_i(t) - s_0(t) + id, \quad i\in\Ical_N, \label{eqn_Delta_time}
\end{align}
providing counterparts of (\ref{eqn_Delta}) and (\ref{eqn_Delta0}). Then, after defining the velocity tracking error as
\begin{align}
e_{1,i}(t) \defl \tilde{h}(\xi_i(t)) - \vref(t)
\label{eqn_veltrackerror_e1_time}
\end{align}
for a reference velocity $\vref$ (specified in time domain) and the introduction of the spacing policy
\begin{align}
\delta_{1,i}(t) \defl (1-\kappa_0)\Delta_i(t) + \kappa_0\Delta_i^0(t) + \kappa e_{1,i}(t),
\label{eqn_delta1_time}
\end{align}
the results of Lemma~\ref{lem_control} and Theorem~\ref{thm_diststringstab} directly hold. Also, it is noted that (\ref{eqn_delta1_time}) represents the constant headway spacing policy (\ref{eqn_constantheadway}) for $\kappa_0=0$.
\end{remark}

In the absence of disturbance, i.e., $w_i=0$ for all $i\in\Ical_N^0$, the sets $\Scal_i$ in (\ref{eqn_Scal}) are positively invariant, as follows from (\ref{eqn_lem_control_setiss}) and the controller design in Lemma~\ref{lem_control}. Consequently, the set $\Scal \defl \bigcap_{i\in\Ical_N^0}\Scal_i$ is positively invariant as well. The dynamics on $\Scal$ is given by
\begin{align}
\begin{split}
\kappa\mathring{\Delta}_0(s) &= -\Delta_0(s), \\
\kappa\mathring{\Delta}_i(s) &= -\Delta_i(s) + (1-\kappa_0)\Delta_{i-1}(s), \quad i\in\Ical_N,
\end{split}\label{eqn_platoonsys_space_delta_Scal}
\end{align}
as follows from (\ref{eqn_platoonsys_space_x}) and (\ref{eqn_platoonsys_space_x_i=0}) for $\delta_i = 0$ (i.e., on the invariant set $\Scal$). It is important to note that (\ref{eqn_platoonsys_space_delta_Scal}) is a direct consequence of the choice of the spacing policy (\ref{eqn_delta1}) rather than the details of the designed controller. Then, on the set $\Scal$, perturbations on the timing error $\Delta_i$ (e.g., when $\Delta_0(0) \neq 0$) do not grow as they propagate through the string, as formally stated as follows.
\begin{proposition}\label{lem_L2stringstab}
Consider the dynamics (\ref{eqn_platoonsys_space_delta_Scal}) and initial conditions satisfying $\Delta_i(0) = 0$ for all $i\in\Ical_N$. Then, for all $i\in\Ical_N$, the timing errors $\Delta_i$ satisfy
\begin{align}
\!\!\int_0^s |\Delta_i(\theta)|^2 \di\theta \leq (1-\kappa_0)^2\int_0^s |\Delta_{i-1}(\theta)|^2 \di\theta, \; \forall s\geq0.\!\!
\label{eqn_lem_L2stringstab}
\end{align}
\end{proposition}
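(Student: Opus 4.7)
The plan is to derive an energy estimate for each subsystem of~(\ref{eqn_platoonsys_space_delta_Scal}) via a Lyapunov-like computation and then apply the Cauchy--Schwarz inequality. First I would multiply the scalar dynamics
\[
\kappa\mathring{\Delta}_i(s) = -\Delta_i(s) + (1-\kappa_0)\Delta_{i-1}(s)
\]
by $\Delta_i(s)$, rewrite the product $\Delta_i\mathring{\Delta}_i$ as $\tfrac{1}{2}\tfrac{\dd}{\dd s}\Delta_i^2$, and integrate over $[0,s]$. Using the initial condition $\Delta_i(0)=0$, this yields the identity
\[
\frac{\kappa}{2}|\Delta_i(s)|^2 + \int_0^s |\Delta_i(\theta)|^2 \di\theta = (1-\kappa_0)\int_0^s \Delta_i(\theta)\Delta_{i-1}(\theta)\di\theta.
\]

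The first term on the left-hand side is non-negative, so it can simply be dropped, giving an upper bound on $\int_0^s |\Delta_i|^2\di\theta$ in terms of the cross integral on the right. To handle that cross term I would apply the Cauchy--Schwarz inequality to obtain
\[
\int_0^s \Delta_i(\theta)\Delta_{i-1}(\theta)\di\theta \leq \left(\int_0^s |\Delta_i(\theta)|^2\di\theta\right)^{1/2}\left(\int_0^s |\Delta_{i-1}(\theta)|^2\di\theta\right)^{1/2}.
\]
Dividing both sides by $\bigl(\int_0^s|\Delta_i|^2\di\theta\bigr)^{1/2}$ (or noting the inequality is trivial when this quantity is zero) and squaring then yields exactly~(\ref{eqn_lem_L2stringstab}).

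There is no real obstacle here: the argument is a textbook $L_2$-gain calculation for the first-order filter with transfer function $(1-\kappa_0)/(\kappa s + 1)$, whose $\Hinf$ norm equals $(1-\kappa_0)$ and is attained at zero frequency. The only points worth being careful about are that the initial-condition hypothesis $\Delta_i(0)=0$ is what removes the boundary term so the bound is independent of the transient due to $\Delta_i$, and that the dynamics of $\Delta_0$ (the homogeneous equation) plays no role in the inductive step; Proposition~\ref{lem_L2stringstab} is stated only for $i\in\Ical_N$, and the argument above handles each such $i$ uniformly.
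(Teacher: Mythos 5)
Your proof is correct and follows essentially the same route as the paper: both multiply the dynamics by $\Delta_i$ (equivalently, use the storage function $\tfrac{1}{2}\kappa\Delta_i^2$), integrate over $[0,s]$, and use $\Delta_i(0)=0$ to discard the nonnegative boundary term. The only difference is cosmetic: the paper bounds the cross term $(1-\kappa_0)\int_0^s\Delta_i\Delta_{i-1}\di\theta$ by completing the square, which yields (\ref{eqn_lem_L2stringstab}) directly, whereas you use Cauchy--Schwarz followed by division (with the harmless case distinction when $\int_0^s|\Delta_i|^2\di\theta=0$); both are valid.
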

\begin{proof}
Let $i\in\Ical_N$ and define the function $V(\Delta_i) \defl \tfrac{1}{2}\kappa\Delta_i^2$. Then, the differentiation of $V$ with respect to space and along trajectories of (\ref{eqn_platoonsys_space_delta_Scal}) yields
\begin{align}
\mathring{V}(\Delta_i) &= -\Delta_i^2 + (1-\kappa_0)\Delta_i\Delta_{i-1}, \\
&= -\tfrac{1}{2}|\Delta_i|^2 + \tfrac{1}{2}|(1-\kappa_0)\Delta_{i-1}|^2 \nonumber\\
&\qquad- \tfrac{1}{2}|(1-\kappa_0)\Delta_{i-1} - \Delta_i|^2, \label{eqn_lem_L2stringstab_proof_step1}
\end{align}
where the latter equality can be checked by completing the squares. The integration of (\ref{eqn_lem_L2stringstab_proof_step1}), hereby recalling that $\Delta_i(0) = 0$ and noting $V(\Delta_i(s))\geq0$, leads to the result (\ref{eqn_lem_L2stringstab}).
\end{proof}
It is remarked that (\ref{eqn_lem_L2stringstab}) essentially represents a string stability property using $\Lcal_2$ signal norms \cite{book_vanderschaft_2000}, albeit with space as the independent variable. Contrary to the case of disturbance string stability in Theorem~\ref{thm_diststringstab}, the stability property in Proposition~\ref{lem_L2stringstab} guarantees that perturbations do not grow unbounded even in the case of absence of leader information (i.e., $\kappa_0=0$).

\begin{table}
\begin{center}
  \caption{\rm Parameter values for the example considered in Section~\ref{sec_evaluation}.}
  \label{tab_example_parameters}
  \begin{tabular}{|c|c||c|c|}
  \hline
  parameter & value & parameter & value \\
  \hline
  $\tau$ & $1$ & $\omega_0$ & $0.05$\\
  $\dt$  & $1$ & $\zeta_0$ & $0.9$\\
  $\kappa_0$ & $0.1$ & $K_1$ & $2\zeta_0\omega_0$\\
  $\kappa$ & $2$ & $K_2$ & $\omega_0^2$ \\
  \hline
  \end{tabular}
\end{center}
\end{table}

\section{Evaluation}\label{sec_evaluation}
In order to evaluate the platooning controller design procedure discussed in Section~\ref{sec_platooncontrol}, the vehicle dynamics (in time domain)
\begin{align}
\begin{split}
\dot{s}_i(t) &= v_i(t), \\
\dot{v}_i(t) &= a_i(t) + w_i(t), \\
\tau\dot{a}_i(t) &= -a_i(t) + u_i(t),
\end{split}\label{eqn_example_dynamics_time}
\end{align}
is considered. Here, $s_i$, $v_i$, and $a_i$ represent the vehicle position, velocity, and acceleration, respectively. It is easily seen that (\ref{eqn_example_dynamics_time}) is of the form (\ref{eqn_platoonsys_time}) with $\xi_i = [\begin{array}{cc} v_i & a_i \end{array}]^{\T}$ and satisfies Assumption~\ref{ass_relativedegree}. The model (\ref{eqn_example_dynamics_time}) extends the vehicle model considered in, e.g., \cite{stankovic_2000,ploeg_2014b}, by the inclusion of external disturbance~$w_i$. It is noted that the approach introduced in this paper allows for more general nonlinear models that include, e.g., aerodynamic effects and engine dynamics. Examples of such detailed vehicle models can be found in~\cite{book_kiencke_2005,book_ulsoy_2012}.

Following the motivation in Section~\ref{sec_spacingpolicies}, a delay-based spacing policy is considered and a controller according to Section~\ref{sec_platooncontrol} is synthesized in the spatial domain. To this end, it is noted that the velocity tracking errors (\ref{eqn_veltrackerror_e1}), (\ref{eqn_veltrackerror_ek}) for the dynamics (\ref{eqn_example_dynamics_time}) in spatial domain read
\begin{align}
e_{1,i}(s) &=  \frac{1}{v_i(s)}-\frac{1}{\vref(s)}, \\
e_{2,i}(s) &= -\frac{a_i(s)}{v_i^3(s)}-\frac{\dd}{\dd s}\!\left(\frac{1}{\vref(s)}\right).\label{eqn_example_e2i}
\end{align}
Here, it is noted that (\ref{eqn_example_e2i}) is related to the acceleration of the vehicle, albeit expressed in the spatial domain. Then, the feedback linearizing controller in (\ref{eqn_inoutlin_ui}) is given as
\begin{align}
u_i(s) &= a_i(s) + 3\tau\frac{a_i^2(s)}{v_i(s)} \nonumber\\
&\qquad- \tau v_i^4(s)\left( \frac{\dd^2}{\dd s^2}\!\left(\frac{1}{\vref(s)}\right) + \bar{u}_i(s)\right),
\end{align}
after which (\ref{eqn_input_ubar}) and the feedback $\tilde{u}_i = K\delta_i$ in Lemma~\ref{lem_control} read
\begin{align}
\bar{u}_i(s) &= \frac{1}{\kappa}\frac{a_i(s)}{v_i^3(s)}
- \frac{1-\kappa_0}{\kappa}\frac{a_{i-1}(s)}{v_{i-1}^3(s)}
- \frac{\kappa_0}{\kappa}\frac{a_0(s)}{v_0^3(s)} \nonumber\\
&\qquad+ K_1\delta_{1,i}(s) + K_2\delta_{2,i}(s),
\end{align}
with $\delta_i$ as in (\ref{eqn_delta1}) and (\ref{eqn_deltak}). The nominal parameters of the vehicle model (\ref{eqn_example_dynamics_time}), spacing policy (\ref{eqn_delta1}), and controller are given in Table~\ref{tab_example_parameters}. As $\kappa_0>0$, vehicles exploit information from both their predecessor and the platoon leader.

\begin{figure}
\begin{center}
  \includegraphics[scale=1]{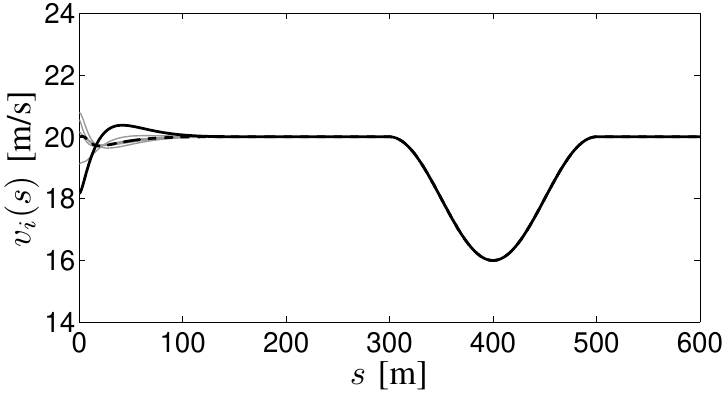}
  \caption{Velocities $v_i$ for the lead vehicle (black) and $N=5$ follower vehicles (gray, with the last one in dashed black) for the delay-based policy (\ref{eqn_delta1}). The initial conditions are randomly generated, the reference velocity reads $\vref(s) = 20 - 2(1-\cos(10^{-2}\pi(s-300)))$\,m/s for $300\leq s\leq500$ and $\vref(s) = 20$\,m/s otherwise.}
  \label{fig_sim_ic_vref_space_vel}
\end{center}
\end{figure}

\begin{figure}
\begin{center}
  \includegraphics[scale=1]{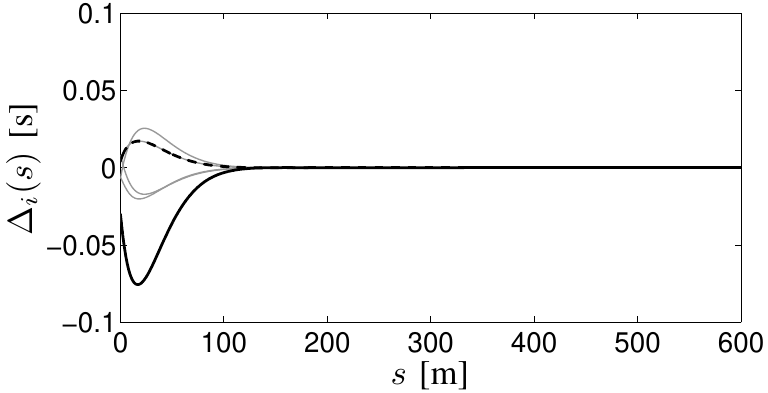}  
  \caption{Timing errors $\Delta_i$ as in (\ref{eqn_Delta}) for the first follower vehicle (black) and the remaining follower vehicles (gray) corresponding to the case in Figure~\ref{fig_sim_ic_vref_space_vel}.}
  \label{fig_sim_ic_vref_space_Delta}
\end{center}
\end{figure}

\begin{figure}
\begin{center}
  \includegraphics[scale=1]{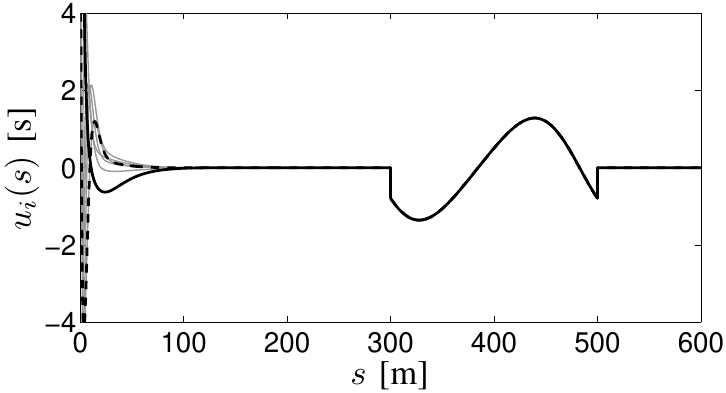}  
  \caption{Control inputs $u_i$ for the lead vehicle (black) and the follower vehicles (gray) corresponding to the case in Figure~\ref{fig_sim_ic_vref_space_vel}.}
  \label{fig_sim_ic_vref_space_input}
\end{center}
\end{figure}

The tracking of a reference velocity profile $\vref$ satisfying Assumption~\ref{ass_vref} is considered in Figures~\ref{fig_sim_ic_vref_space_vel} to~\ref{fig_sim_ic_vref_space_input}, where initial conditions are randomly generated perturbations of the equilibrium. By control design, this equilibrium satisfies $\Delta_i=0$ and $\delta_i=0$ for all $\Ical_N^0$, as can be observed in (\ref{eqn_platoonsys_space_x}) (see also (\ref{eqn_platoonsys_space_x_i=0}) in Remark~\ref{rem_leadvehicle} for the lead vehicle). By the definitions of $\Delta_i$ in (\ref{eqn_Delta}) and $\delta_{i,1}$ in (\ref{eqn_delta1}), it follows that the velocity error $e_{i,1}$ satisfies $e_{i,1} = 0$ at this equilibrium, such that the desired velocity profile is tracked. In the absence of disturbances (i.e., $w_i=0$), it follows from Theorem~\ref{thm_diststringstab} that this equilibrium is asymptotically stable. This is also observed by the tracking of the reference velocity in Figure~\ref{fig_sim_ic_vref_space_vel}, whereas Figure~\ref{fig_sim_ic_vref_space_Delta} shows that the desired spacing policy is obtained. It is recalled that these objectives are compatible through Proposition~\ref{lem_timegap}. Finally, it is clear from the input signals in Figure~\ref{fig_sim_ic_vref_space_input} that all vehicles have the same behavior in the spatial domain.

In order to illustrate that the tracking of a spatially varying reference velocity is a distinguishing feature of the delay-based spacing policy, the constant headway policy is considered as an alternative. In particular, the spacing policy (\ref{eqn_delta1_time}) in Remark~\ref{rem_timedomain} is considered, even though (\ref{eqn_veltrackerror_e1_time}) is replaced by $e_{1,i}(t) = v_i(t) - \vref(s_i(t))$ to target tracking of the spatially varying reference velocity. For this spacing policy, a controller is synthesized in time domain according to the discussion in Remark~\ref{rem_timedomain}, where the parameters in Table~\ref{tab_example_parameters} are adapted to give the same time scales as the controller used in Figures~\ref{fig_sim_ic_vref_space_vel} and~\ref{fig_sim_ic_vref_space_Delta} for a nominal velocity of $v_{\text{\rm nom}} = 20$\,m/s.

\begin{figure}
\begin{center}
  \includegraphics[scale=1]{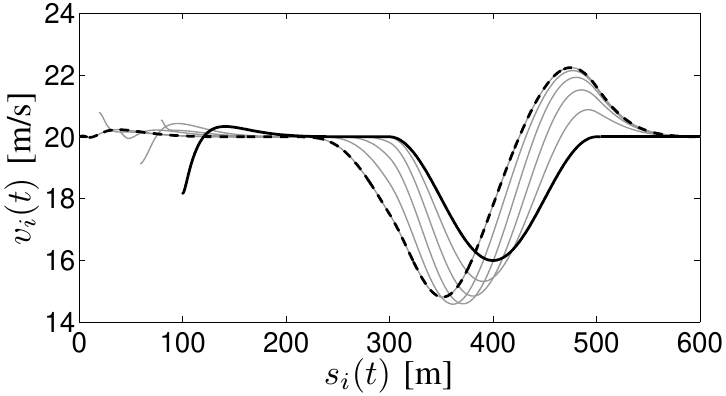}  
  \caption{Velocities $v_i$ for the lead vehicle (black) and $N=5$ follower vehicles (gray) for the constant headway gap policy (\ref{eqn_delta1_time}). The same case as in Figure~\ref{fig_sim_ic_vref_space_vel} is considered. Moreover, the parameter values for (\ref{eqn_delta1_time}) and the controller are chosen such that the resulting time scales equal that of the controller for the delay-based policy in Figure~\ref{fig_sim_ic_vref_space_vel} when a nominal velocity of $v_{\rm nom} = 20$\,m/s is used.}
  \label{fig_sim_ic_vref_time_vel}
\end{center}
\end{figure}

\begin{figure}
\begin{center}
  \includegraphics[scale=1]{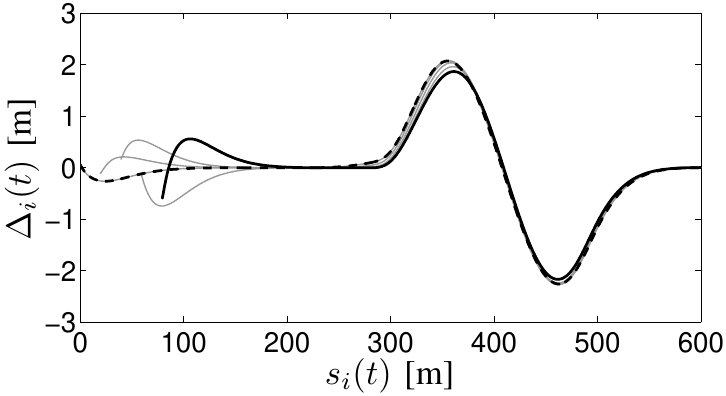}  
  \caption{Spacing errors $\Delta_i$ as in (\ref{eqn_Delta_time}) for the first follower vehicle (black) and the remaining follower vehicles (gray) corresponding to the case in Figure~\ref{fig_sim_ic_vref_time_vel}.}
  \label{fig_sim_ic_vref_time_Delta}
\end{center}
\end{figure}

\begin{figure}
\begin{center}
  \includegraphics[scale=1]{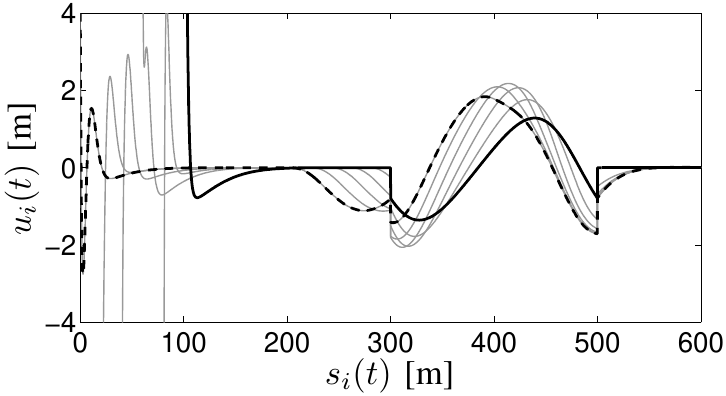}  
  \caption{Control inputs $u_i$ for the lead vehicle (black) and the follower vehicles (gray) corresponding to the case in Figure~\ref{fig_sim_ic_vref_time_vel}.}
  \label{fig_sim_ic_vref_time_input}
\end{center}
\end{figure}

The results of this time domain controller using a constant headway strategy are depicted in Figures~\ref{fig_sim_ic_vref_time_vel} to~\ref{fig_sim_ic_vref_time_input}. It can be observed that this controller indeed achieves the stabilization of the desired equilibrium point as long as the reference velocity is constant. However, it is clear from Figure~\ref{fig_sim_ic_vref_time_vel} that the vehicles do not accurately track the desired reference velocity (defined in the spatial domain). Moreover, the change in reference velocity leads to a perturbation in the achieved spacing as well, as depicted in Figure~\ref{fig_sim_ic_vref_time_Delta}, with the control inputs in Figure~\ref{fig_sim_ic_vref_time_input}. Even though emphasis can be put on either the tracking of the reference velocity or the desired spacing through the choice of the parameter $\kappa$ in (\ref{eqn_delta1_time}), it is stressed that an increase in tracking performance of the reference velocity will lead to a less accurate tracking of the spacing policy (and vice versa). Namely, the tracking of a spatially varying reference velocity is fundamentally incompatible with the simultaneous tracking of a constant headway policy, as discussed in Section~\ref{sec_spacingpolicies}. Consequently, the use of alternative control strategies will not mitigate this effect.

\begin{figure}
\begin{center}
  \includegraphics[scale=1]{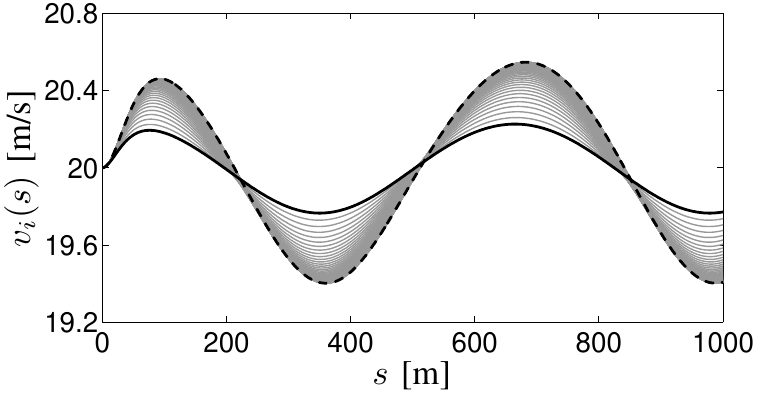}  
  \caption{Velocities $v_i$ for the lead vehicle (black) and $N=50$ follower vehicles (gray, with the last one in dashed black) for zero initial conditions and $\vref(s) = 20$\,m/s for all $s\geq0$. The disturbance is given as $w_i(s) = \sin(10^{-2}s)$ for all $i\in\Ical_N^0$.}
  \label{fig_sim_disturbance}
\end{center}
\end{figure}

\begin{figure}
\begin{center}
  \includegraphics[scale=1]{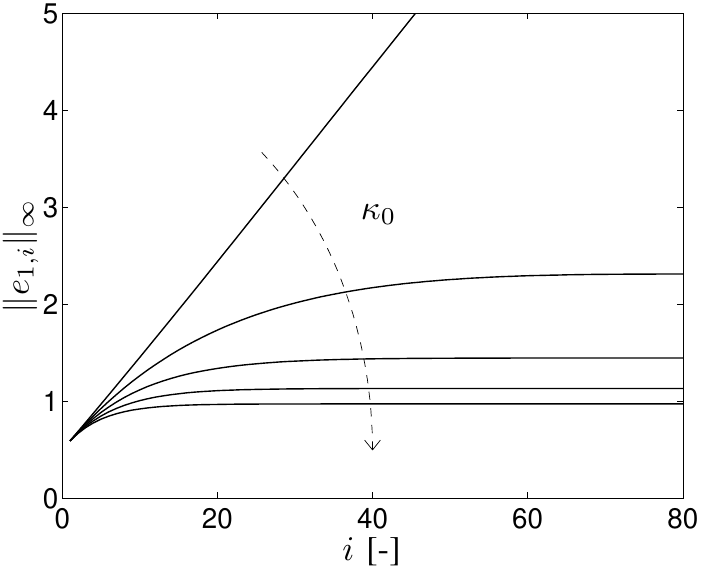}  
  \caption{Maximum velocity errors $e_{1,i}$ as in (\ref{eqn_veltrackerror_e1}) for $\kappa_0 \in\{0,0.05,0.1,0.15,0.2\}$ and disturbance $w_i(s) = \sin(10^{-2}s)$ for $i = \Ical_N$ and $w_0(s) = 0$. As $\kappa_0$ grows, $\|e_{1,i}\|_{\infty}$ decreases, as indicated by the dashed arrow.}
  \label{fig_disturbance_eLinf_N=102}
\end{center}
\end{figure}

Returning to the case of the delay-based spacing policy, Figure~\ref{fig_sim_disturbance} shows the velocities $v_i$ of $N+1=51$ vehicles subject to a common disturbance, hereby again using the parameter values in Table~\ref{tab_example_parameters}. As the disturbance is bounded, the results of Theorem~\ref{thm_diststringstab} hold and the platoon is disturbance string stable as in Definition~\ref{def_diststringstab}. Consequently, there is a uniform (over the platoon index) bound on the deviations from the equilibrium (given by $\Delta_i = 0$ and $\delta_i=0$, for which $v_i(s) = \vref(s)$), as can also be observed in Figure~\ref{fig_sim_disturbance}. Next, the maximum velocity errors $e_{1,i}$ for a platoon with $N = 80$ follower vehicles are depicted in Figure~\ref{fig_disturbance_eLinf_N=102} for varying values of $\kappa_0$. Here, the same disturbance as in Figure~\ref{fig_sim_disturbance} is considered. As stated in Theorem~\ref{thm_diststringstab}, there are uniform bounds on the velocity errors when $\kappa_0 > 0$, i.e., when information of the lead vehicle is shared with all other vehicles in the platoon. It is noted that the case $\kappa_0=0$ indeed leads to unbounded velocity errors for growing platoon size (i.e., an absence of disturbance string stability). This indicates that the results of Theorems~\ref{thm_localios} and~\ref{thm_diststringstab} are not conservative.

\section{Conclusions}\label{sec_conclusions}
The control of vehicle platoons was considered in this paper, hereby exploiting a novel delay-based spacing policy that guarantees that all vehicles in the platoon track the same velocity profile in the spatial domain. This property is particularly relevant for vehicles that track a spatially varying velocity profile, such as heavy-duty vehicles driving over hilly terrain. The influence of external disturbances was addressed by the introduction of disturbance string stability. A controller was designed that tracks a reference velocity profile, maintains the desired spacing policy, and achieves disturbance string stability. In fact, it was shown that string stability is the result of the spacing policy rather than the specific controller design.

Even though homogeneous vehicle platoons were considered, the controller design approach presented in this paper has the potential to be applicable to heterogeneous vehicle platoon as well. Apart from being supported by the notion of disturbance string stability, the result that the details of controller design are less crucial than the chosen spacing policy suggest that non-identical vehicles can be considered as long as a common spacing policy is adopted.
Also, it is remarked that the space-based control approach taken in this paper can be particularly relevant for the lateral control of vehicles in a platoon, as road features such as corners are specified in the spatial domain rather than time domain. Future work will focus on these aspects as well as on the practical implementation of controllers designed in the spatial domain (and potentially in the presence of measurement errors). Another interesting direction for future research is the extension of the controller design to vehicle platoons with general interconnection topology, as can be enabled by wireless communication.

\appendix
\section{Proofs}
\subsection{Proof of Theorem~\ref{thm_localiss}}\label{app_thm_localiss_proof}
The theorem will be proven in two steps. First, it will be shown that all $x_i(\theta)$ are bounded for all $\theta\geq\theta_0$ and, second, the bound of the form (\ref{eqn_def_diststringstab}) will be shown.

In order to prove boundedness of $x_i(\theta)$, constants $\bar{c}$ and $\bar{c}_w$ satisfying  $0<\bar{c} < c$ and $0<\bar{c}_w < c_w$ are introduced. A specific choice for $\bar{c}$ and $\bar{c}_w$ will be made later. Now, taking initial conditions $|x_i(\theta_0)| < \bar{c}$ and disturbances $w_i$ bounded as $\infnorm{w_i}{[\theta_0,\infty)} < \bar{c}_w$, it is noted that (\ref{eqn_thm_localiss}) gives
\begin{align}
	\infnorm{x_i}{[\theta_0,\theta]} &\leq \beta\big(|x_i(\theta_0)|,0\big) \nonumber\\
	&\qquad + \bar{\gamma}\infnorm{x_{i-1}}{[\theta_0,\theta]} + \sigma\big(\infnorm{w_i}{[\theta_0,\theta]}\big),
	\label{eqn_thm_localiss_proof_step1}
\end{align}
for all $i\in\Ical_N$ whenever $\|x_{i-1}\|_{\infty}<c$. Moreover, it is noted that, due to the structure of the interconnection in (\ref{eqn_cascsys_dist}), the bound for system $i=0$ reads
\begin{align}
\infnorm{x_0}{[\theta_0,\theta]} \leq \beta\big(|x_0(\theta_0)|,0\big) + \sigma\big(\infnorm{w_0}{[\theta_0,\theta]}\big).
\label{eqn_thm_localiss_proof_step2}
\end{align}
Then, it can be concluded that the recursive application of (\ref{eqn_thm_localiss_proof_step1}) and the use of (\ref{eqn_thm_localiss_proof_step2}) yields
\begin{align}
\infnorm{x_i}{[\theta_0,\theta]} &\leq \sum_{j=0}^i \bar{\gamma}^{i-j} \beta\big(|x_j(\theta_0)|,0\big) \nonumber\\
&\qquad + \sum_{j=0}^i \bar{\gamma}^{i-j} \sigma\big(\infnorm{w_j}{[\theta_0,\theta]}\big),\!\!
\label{eqn_thm_localiss_proof_step3}
\end{align}
for all $i\in\Ical_N^0$. By the properties of the class \classKL function $\beta$, it directly follows that $\beta(|x_j(0)|,0)\leq\beta(\sup_{k\in\Ical_N^0}|x_k(0)|,0)$ for any $j\in\Ical_N^0$, so that a uniform bound is obtained on all terms that depend on the initial condition. A similar bound can be obtained on $\sigma(\infnorm{w_j}{[\theta_0,\theta]})$. In addition,
\begin{align}
\sum_{j=0}^i \bar{\gamma}^{i-j}
\leq  \sum_{j=0}^N \bar{\gamma}^{N-j}
<  \sum_{l=0}^\infty \bar{\gamma}^l = \frac{1}{1-\bar{\gamma}},
\label{eqn_thm_localiss_proof_step4}
\end{align}
which follows from noting that the sum in (\ref{eqn_thm_localiss_proof_step4}) represents a geometric series with $0<\bar{\gamma}<1$. The use of these bounds in (\ref{eqn_thm_localiss_proof_step3}) gives
\begin{align}
\infnorm{x_i}{[\theta_0,\theta]} &\leq \frac{1}{1-\bar{\gamma}} \, \beta\!\left(\sup_{j\in\Ical_N^0}|x_j(\theta_0)|,0\right) \nonumber\\
&\qquad+ \frac{1}{1-\bar{\gamma}} \, \sigma\!\left(\sup_{j\in\Ical_N^0}\infnorm{w_j}{[\theta_0,\theta]}\right),
\label{eqn_thm_localiss_proof_step5}
\end{align}
for all $i\in\Ical_N^0$ and all $N\in\N$. Here, it is stressed that (\ref{eqn_thm_localiss_proof_step5}) represents a uniform bound on state perturbations for all systems in a possibly (countably) infinite interconnection.
Moreover, when the constants $\bar{c} < c$ and $\bar{c}_w < c_w$ are taken to satisfy $\beta(\bar{c},0) + \sigma(\bar{c}_w) < (1-\bar{\gamma})c$, it is clear that $\|x_i\|_{\infty} < c$ for all $i\in\Ical_N^0$ and the derivation above is consistent with the assumptions in the statement of the theorem.

For future reference, the function $\delta$ and constant $\Delta^{[\theta_0,\theta]}$ are introduced as
\begin{align}
\delta(r) &\defl \frac{1}{1-\bar{\gamma}}\,\beta(r,0), \label{eqn_thm_localiss_proof_delta}\\
\Delta^{[\theta_0,\theta]} &\defl \frac{1}{1-\bar{\gamma}}\,\sigma\!\left(\sup_{j\in\Ical_N^0}\infnorm{w_i}{[\theta_0,\theta]}\right),\!
\label{eqn_thm_localiss_proof_Delta}
\end{align}
such that (\ref{eqn_thm_localiss_proof_step5}) can be written as $\infnorm{x_i}{[\theta_0,\theta]} \leq \delta(\sup_{j\in\Ical_N^0}|x_j(\theta_0)|) + \Delta^{[\theta_0,\theta]}$. Note that the function $\delta$ is of class $\classK$.

It remains to be proven that there exists an estimate of the form (\ref{eqn_def_diststringstab}), in which the influence of the initial condition vanishes as $\theta\rightarrow\infty$. To this end, consider system $i\in\Ical_N^0$ and let $\{\vartheta_j^i\}_{j=0}^{i+1}$ be a sequence that satisfies
\begin{align}
\theta_0 < \vartheta_0^i < \vartheta_1^i < \ldots  < \vartheta_j^i < \ldots < \vartheta_i^i < \vartheta_{i+1}^i < \theta
\label{eqn_thm_localiss_proof_varthetaji_ineq}
\end{align}
Then, consider the trajectories of systems with indices $j\leq i$ in the time interval $[\vartheta_{j+1}^i,\theta]$ by applying (\ref{eqn_thm_localiss}), hereby using the bound on its initial condition at $\vartheta_j$. This yields the bound
\begin{align}
\|x_j\|_{\infty}^{[\vartheta_{j+1}^i,\theta]} &\leq \beta\big(|x_j(\vartheta_j^i)|,\vartheta_{j+1}^i-\vartheta_j^i\big)
+ \bar{\gamma}\|x_{j-1}\|_{\infty}^{[\vartheta_j^i,\theta]} \nonumber\\
&\qquad+ \sigma\!\left(\|w_j\|_{\infty}^{[\vartheta_j^i,\theta]}\right)
\label{eqn_thm_localiss_proof_step6}
\end{align}
for all $j\leq i\in\Ical_N$, whereas the bound for $j = 0$ reads
\begin{align}
\|x_0\|_{\infty}^{[\vartheta_{1}^i,\theta]} \leq \beta\big(|x_0(\vartheta_0^i)|,\vartheta_1^i-\vartheta_0^i\big) + \sigma\!\left(\|w_0\|_{\infty}^{[\vartheta_0^i,\theta]}\right).
\label{eqn_thm_localiss_proof_step7}
\end{align}
Similar to before, the recursive application of (\ref{eqn_thm_localiss_proof_step6}) and the use of (\ref{eqn_thm_localiss_proof_step7}) can be shown to lead to
\begin{align}
\|x_i\|_{\infty}^{[\vartheta_{i+1}^i,\theta]} &\leq
\sum_{j=0}^i \bar{\gamma}^{i-j} \beta\big(|x_j(\vartheta_j^i)|,\vartheta_{j+1}^i-\vartheta_j^i\big) \nonumber\\
&\qquad+ \sum_{j=0}^i \bar{\gamma}^{i-j} \sigma\!\left(\|w_j\|_{\infty}^{[\vartheta_j^i,\theta]}\right).
\label{eqn_thm_localiss_proof_step8}
\end{align}
Here, it is recalled that the choice of the parameters $\bar{c}$ and $\bar{c}_w$ guarantees that $\|x_i\|_{\infty} < c$, enabling the repeated application of (\ref{eqn_thm_localiss}).

In order to show that the first term on the right-hand-side of (\ref{eqn_thm_localiss_proof_step8}) can be bounded by a function of class $\classKL$, the sequence $\{\vartheta_j^i\}_{j=0}^{i+1}$ is chosen as
\begin{align}
\vartheta_{j}^i = \theta - (1 - \bar{\omega})\sum_{l=0}^{1+i-j}\bar{\omega}^l (\theta - \theta_0),
\label{eqn_thm_localiss_proof_varthetaji_def}
\end{align}
such that
\begin{align}
\vartheta_{j+1}^i - \vartheta_{j}^i = (1 - \bar{\omega})\bar{\omega}^{1+i-j}(\theta - \theta_0)
\label{eqn_thm_localiss_proof_dvarthetaji}
\end{align}
for any $0 \leq j\leq i$. Here, $0<\bar{\omega}<1$ is a parameter that will be specified later. By this choice, (\ref{eqn_thm_localiss_proof_varthetaji_def}) represents a geometric series in which the time intervals (\ref{eqn_thm_localiss_proof_dvarthetaji}) shrink as subsystems further away from system $i$ are considered. Moreover, it is clear by the scaling with $1-\bar{\omega}$ that (\ref{eqn_thm_localiss_proof_varthetaji_ineq}) holds for any $i\in\Ical_N$ and $N\in\N$.

Next, define a function $\phi$ as
\begin{align}
\phi(r,s) \defl \sup_{\omega\in(0,1]} \omega^q \beta(r,\omega s)
\label{eqn_thm_localiss_proof_phi}
\end{align}
for some $q>0$. From this definition it follows that $\phi$ is of class $\classKL$ and that $\phi(r,s) \geq \beta(r,s)$ for all $r,s\geq0$, where equality holds if $\beta$ satisfies the condition (\ref{eqn_thm_localiss_beta}). In fact, $\phi$ in (\ref{eqn_thm_localiss_proof_phi}) always satisfies the condition (\ref{eqn_thm_localiss_beta}). Namely, for any $\tilde{\omega}$ such that $0<\tilde{\omega}\leq1$, it follows from (\ref{eqn_thm_localiss_proof_phi}) that
\begin{align}
\tilde{\omega}^q\phi(r,\tilde{\omega} s)
&= \sup_{\omega\in(0,1]} \tilde{\omega}^q\omega^q\beta(r,\tilde{\omega}\omega s), \nonumber\\
&= \sup_{c\in(0,\tilde{\omega}]} c^q\beta(r,cs), \nonumber\\
&\leq \sup_{c\in(0,1]} c^q\beta(r,cs)
= \phi(r,s).
\label{eqn_thm_localiss_proof_phi_ineq}
\end{align}
Using the fact that $\beta(r,s)\leq\phi(r,s)$ for all $r,s\geq0$ and the choice of the intervals (\ref{eqn_thm_localiss_proof_dvarthetaji}), the first term on the right-hand-side of (\ref{eqn_thm_localiss_proof_step8}) can be bounded as
\begin{align}
\sum_{j=0}^i &\bar{\gamma}^{i-j} \beta\big(|x_j(\vartheta_j^i)|,\vartheta_{j+1}^i-\vartheta_j^i\big) \nonumber\\
&\leq \sum_{j=0}^i \bar{\gamma}^{i-j} \phi\Big(|x_j(\vartheta_j^i)|,(1-\bar{\omega})\bar{\omega}^{1+i-j}(\theta - \theta_0) \Big),
\label{eqn_thm_localiss_proof_step9}\\
&\leq \sum_{j=0}^i \frac{\bar{\gamma}^{i-j}}{\big((1-\bar{\omega})\bar{\omega}^{1+i-j}\big)^q}
\phi\big(|x_j(\vartheta_j^i)|,\theta - \theta_0 \big),
\label{eqn_thm_localiss_proof_step10}\\
&= \sum_{j=0}^i \frac{1}{(1 - \bar{\omega})^q\bar{\omega}^q}
\bigg(\frac{\bar{\gamma}}{\bar{\omega}^q}\bigg)^{\!i-j} \phi\big(|x_j(\vartheta_j^i)|,\theta - \theta_0 \big),\label{eqn_thm_localiss_proof_step11}
\end{align}
where the property (\ref{eqn_thm_localiss_proof_phi_ineq}) is used to obtain (\ref{eqn_thm_localiss_proof_step10}). Here, it is noted that $0<(1-\bar{\omega})\bar{\omega}^{1+i-j}<1$ for any $0\leq j\leq i$ as $\bar{\omega}$ satisfies $0<\bar{\omega}<1$, such that (\ref{eqn_thm_localiss_proof_phi_ineq}) can indeed be applied. Even though (\ref{eqn_thm_localiss_proof_step11}) provides a time-dependent upper bound, it is not yet of the form (\ref{eqn_def_diststringstab}) due to the appearance of the norm $|x_j(\vartheta_j^i)|$. Therefore, it is recalled that this norm can be bounded through (\ref{eqn_thm_localiss_proof_step5}), which, by using the notation (\ref{eqn_thm_localiss_proof_delta}), (\ref{eqn_thm_localiss_proof_Delta}), gives
\begin{align}
|x_j(\vartheta_j^i)| \leq \infnorm{x_j}{[\theta_0,\theta]} \leq \delta\!\left(\sup_{k\in\Ical_N^0}|x_k(0)|\right) + \Delta^{[\theta_0,\theta]}.
\label{eqn_thm_localiss_proof_step12}
\end{align}
Next, it is remarked that the parameter $\bar{\omega}$ satisfying $0<\bar{\omega}<1$ can be chosen such that $\bar{\gamma}<\bar{\omega}^q<1$, which follows from the property $0<\bar{\gamma}<1$. Using this choice, the function $\tilde{\beta}$ defined as
\begin{align}
\tilde{\beta}(r,s) \defl \frac{1}{(1 - \bar{\omega})^q}\frac{1}{\bar{\omega}^q - \bar{\gamma}}\phi(r,s),
\label{eqn_thm_localiss_proof_step13}
\end{align}
is well-defined, of class $\classKL$, and satisfies
\begin{align}
\sum_{j=0}^i \frac{1}{(1 - \bar{\omega})^q\bar{\omega}^q} \bigg(\frac{\bar{\gamma}}{\bar{\omega}^q}\bigg)^{\!i-j} \phi\big(r,s \big) \leq \tilde{\beta}(r,s),
\label{eqn_thm_localiss_proof_step14}
\end{align}
where the inequality follows by noting that the sum at the left-hand-side of (\ref{eqn_thm_localiss_proof_step14}) represents a convergent series due to $0<\bar{\gamma}<\bar{\omega}^q$. Now, after the substitution of (\ref{eqn_thm_localiss_proof_step12}) in (\ref{eqn_thm_localiss_proof_step11}) and the use of the upper bound (\ref{eqn_thm_localiss_proof_step14}), as well as the observation that $|x_i(\theta)|\leq \infnorm{x_i}{[\vartheta_{i+1}^i,\theta]}$, it follows that (\ref{eqn_thm_localiss_proof_step8}) can be bounded as
\begin{align}
|x_i(\theta)| &\leq \tilde{\beta}\!\left( \delta\!\left(\sup_{j\in\Ical_N^0}|x_j(\theta_0)|\right) + \Delta^{[\theta_0,\theta]}, \theta - \theta_0\right) \nonumber\\
&\qquad+ \Delta^{[\theta_0,\theta]},
\label{eqn_thm_localiss_proof_step15}
\end{align}
where the bound on the disturbance-dependent terms in (\ref{eqn_thm_localiss_proof_step8}) follows from (\ref{eqn_thm_localiss_proof_Delta}). It is noted that the bound (\ref{eqn_thm_localiss_proof_step15}) holds for any $i\in\Ical_N^0$ and all $N\in\N$ and thus presents a uniform bound as in the definition of disturbance string stability.

In order to address the appearance of $\Delta^{[\theta_0,\theta]}$ in the argument of the class $\classKL$ function $\tilde{\beta}$ in (\ref{eqn_thm_localiss_proof_step15}), it is recalled that $|x_i(\theta)|$ also satisfies the bound (\ref{eqn_thm_localiss_proof_step5}). It is therefore natural to consider the tightest of the bounds (\ref{eqn_thm_localiss_proof_step5}) and (\ref{eqn_thm_localiss_proof_step15}) through the introduction of the function
\begin{align}
\kappa(r,\Delta,\theta-\theta_0) \defl \min\!\big\{ \tilde{\beta}\big(\delta(r) + \Delta,\theta-\theta_0\big), \delta(r) \big\},
\label{eqn_thm_localiss_proof_kappa}
\end{align}
where the subscript in $\Delta^{[\theta_0,\theta]}$ is omitted for ease of exposition.
In particular, the function $\kappa$ satisfies
\begin{align}
\kappa(r,\Delta,\theta-\theta_0) &\leq \kappa\big(r,\alpha^{-1}(r),\theta-\theta_0 \big) \nonumber\\
&\qquad+ \kappa\big(\alpha(\Delta),\Delta,\theta-\theta_0\big)
\label{eqn_thm_localiss_proof_kappaineq}
\end{align}
for any function $\alpha$ of class $\classKinf$ (see \cite{jiang_1994}). By selecting any of the two terms in the minimum in the definition of $\kappa$ in (\ref{eqn_thm_localiss_proof_kappa}), the inequality (\ref{eqn_thm_localiss_proof_kappaineq}) leads to
\begin{align}
\!\!\!\kappa(r,\Delta,\theta-\theta_0) \leq \tilde{\beta}\big(\delta(r)+\alpha^{-1}(r),\theta-\theta_0\big) + \delta\circ\alpha(\Delta),\!\!\!
\label{eqn_thm_localiss_proof_step16}
\end{align}
such that
\begin{align}
|x_i(\theta)| &\leq \bar{\beta}\!\left(\sup_{i\in\Ical_N^0} |x_i(\theta_0)|,\theta-\theta_0\right) \nonumber\\
&\qquad+ \bar{\sigma}\!\left(\sup_{i\in\Ical_N^0} \|w_i\|_{\infty}^{[\theta_0,\theta]}\!\right).
\label{eqn_thm_localiss_proof_step17}
\end{align}
Here, the function $\bar{\beta}$ is defined as $\bar{\beta}(r,\vartheta) \defl \tilde{\beta}(\delta(r)+\alpha^{-1}(r),\vartheta)$ with $\tilde{\beta}$ as obtained through (\ref{eqn_thm_localiss_proof_step13}) and (\ref{eqn_thm_localiss_proof_phi}). As a result, $\bar{\beta}$ is of class $\classKL$. Moreover, by using the definition of $\Delta = \Delta^{[\theta_0,\theta]}$ in (\ref{eqn_thm_localiss_proof_Delta}) it follows that $\bar{\sigma}$ is given by $\bar{\sigma}(r) = (\id + \delta\circ\alpha)((1-\bar{\gamma})^{-1}\sigma(r))$ with $\delta$ as in (\ref{eqn_thm_localiss_proof_delta}) and where $\id$ denotes the identity function satisfying $\id(r) = r$ for all $r\geq0$. Then, $\bar{\sigma}$ is of class~$\classKinf$.

It is recalled that the bound (\ref{eqn_thm_localiss_proof_step17}) applies to any initial condition satisfying $|x_i(\theta_0)| < \bar{c}$ and disturbance $w_i$ satisfying $\|w_i\|_{\infty} < \bar{c}_w$ and holds for all $i\in\Ical_N^0$ and all $N\in\N$. As a result, the first statement in Theorem~\ref{thm_localiss} is proven.
The second statement follows by noting that (\ref{eqn_thm_localiss_beta}) implies $\phi = \beta$ in (\ref{eqn_thm_localiss_proof_phi}) and the definition (\ref{eqn_thm_localiss_proof_step13}).
Finally, it can easily be observed that the results obtained in this proof hold globally when $c = \infty$ and $c_w = \infty$, proving the third statement.

\subsection{Proof of Theorem~\ref{thm_localios}}\label{app_thm_localios_proof}
The proof of this theorem will rely on the ideas developed in the proof of Theorem~\ref{thm_localiss}.

Thereto, constants $\bar{c}$ and $\bar{c}_w$ are introduced satisfying $0<\bar{c}<c$ and $0<\bar{c}_w<c_w$. Then, the recursive application of (\ref{eqn_thm_localios}) for $\theta = \theta_0$, hereby taking initial conditions $|x_i(\theta)| < \bar{c}$ and disturbances $\|w_i\|<\bar{c}_w$, yields
\begin{align}
\|y_i\|_{\infty} &\leq \frac{1}{1-\bar{\gamma}} \, \beta_y\!\!\left( \sup_{j\in\Ical_N^0}|x_j(\theta_0)|, 0\right) \nonumber\\
&\qquad+ \frac{1}{1-\bar{\gamma}} \, \sigma_y\!\!\left(\sup_{j\in\Ical_N^0}\|w_j\|_{\infty}\right),
\label{eqn_thm_localios_proof_step1}
\end{align}
for all $i\in\Ical_N^0$ and $N\in\N$, analogous to (\ref{eqn_thm_localiss_proof_step5}) in the proof of Theorem~\ref{thm_localiss}. Then, the substitution of (\ref{eqn_thm_localios_proof_step1}) in (\ref{eqn_thm_localios_iss}) leads to a boundedness of trajectories $x_i$ as
\begin{align}
\|x_i\|_{\infty} \leq \delta\!\left( \sup_{j\in\Ical_N^0}|x_j(\theta_0)| \right) + \Delta\!\left(\sup_{j\in\Ical_N^0}\|w_j\|_{\infty}\right),
\label{eqn_thm_localios_proof_step2}
\end{align}
where the functions $\delta$ and $\Delta$ of class $\classKinf$ are given as
\begin{align}
\delta(r) &\defl \beta_x(r,0) + \gamma_x\!\left(\frac{2}{1-\bar{\gamma}}\beta_y(r,0)\right),
\label{eqn_thm_localios_proof_delta}\\
\Delta(r) &\defl \sigma_x(r) + \gamma_x\!\left(\frac{2}{1-\bar{\gamma}} \,\sigma_y(r)\right).
\label{eqn_thm_localios_proof_Delta}
\end{align}
In the derivation of (\ref{eqn_thm_localios_proof_delta}) and (\ref{eqn_thm_localios_proof_Delta}), the property $\gamma_x(r_1 + r_2)\leq\gamma_x(2r_1) + \gamma_x(2r_2)$ is used. Now, choosing the constants $\bar{c}<c$ and $\bar{c}_w<c_w$ to satisfy $\delta(\bar{c})+\Delta(\bar{c}_w) < c$ ensures that conditions in the statement of the theorem hold.

In order to show that the effect of initial condition vanishes as $\theta\rightarrow\infty$, the ideas in the proof of Theorem~\ref{thm_localiss} are adopted. Namely, analogous to (\ref{eqn_thm_localiss_proof_step15}) in Appendix~\ref{app_thm_localiss_proof}, there exists a function $\tilde{\beta}_y$ of class $\classKL$ such that
\begin{align}
|y_i(\theta)| &\leq \tilde{\beta}_y\!\left( \delta\!\left( \sup_{j\in\Ical_N^0}|x_j(\theta_0)| \!\right) \!+\! \Delta\!\left(\sup_{j\in\Ical_N^0}\|w_j\|_{\infty}\!\right)\!, \theta\!-\!\theta_0 \right) \nonumber\\
&\qquad+ \frac{1}{1-\bar{\gamma}} \, \sigma_y\!\left( \sup_{j\in\Ical_N^0}\|w_j\|_{\infty}\right).
\label{eqn_thm_localios_proof_step3}
\end{align}
Here, (\ref{eqn_thm_localios_proof_step2}) is used to bound estimates of the initial conditions. As in the proof of Theorem~\ref{thm_localiss}, (\ref{eqn_thm_localios_proof_step3}) is a uniform bound and it holds for all $i\in\Ical_N^0$ and all~$N\in\N$.

Next, in order to combine the bounds (\ref{eqn_thm_localios_proof_step1}) and (\ref{eqn_thm_localios_proof_step3}), the function $\kappa$ is introduced as
\begin{align}
\kappa(r,\Delta(s),\theta-\theta_0) &\defl \min\!\big\{ \tilde{\beta}_y\big(\delta(r) + \Delta(s),\theta-\theta_0\big),\nonumber\\
&\qquad\hspace{2.5cm} \tfrac{1}{1-\bar{\gamma}}\beta_y(r,0) \big\},
\end{align}
which is of the same form as (\ref{eqn_thm_localiss_proof_kappa}). Consequently, a bound of the form (\ref{eqn_thm_localiss_proof_kappaineq}) holds, which allows for obtaining a bound of the form
\begin{align}
|y_i(\theta)| &\leq \bar{\beta}_y\!\!\left( \sup_{i\in\Ical_N^0}|x_i(\theta_0)|, \theta-\theta_0 \right) \nonumber\\
&\qquad+ \bar{\sigma}_y\!\!\left( \sup_{i\in\Ical_N^0}\|w_i\|_{\infty}^{[\theta_0,\theta]} \right).
\label{eqn_thm_localios_proof_step4}
\end{align}
Here, $\bar{\beta}_y(r,s)\defl\tilde{\beta}_y(\delta(r)+\alpha^{-1}(r),\vartheta)$ and $\bar{\sigma}_y(r) \defl (1-\bar{\gamma})^{-1}( \beta_y(\Delta(r),0) + \sigma_y(r) )$, with $\delta$ and $\Delta$ as in (\ref{eqn_thm_localios_proof_delta}) and (\ref{eqn_thm_localios_proof_Delta}), respectively.

Now, the proof can be finalized by noting that the substitution of (\ref{eqn_thm_localios_proof_step4}) in (\ref{eqn_thm_localios_iss}) leads to a bound of the form (\ref{eqn_def_diststringstab}) through the use of standard results on the cascade interconnection of input-to-state stable systems (see, e.g., \cite{sontag_1989,book_krstic_1995}).

\subsection{Proof of Lemma~\ref{lem_control}}\label{app_lem_control_proof}
In order to prove the lemma, it is first noted that $|x|_{\Scal_i} = |\delta_i|$. Then, after introducing the function $V(x) = \delta_i^{\T}P\delta_i$ for some $P = P^{\T}\succ0$, it is clear that $\alpha_1(|x|_{\Scal_i}) \leq V(x) \leq \alpha_2(|x|_{\Scal_i})$ for some functions $\alpha_1$, $\alpha_2$ of class $\classKinf$.

By asymptotic stability of the matrix $A+\kappa BK$, it follows that $P$ can be chosen to satisfy
\begin{align}
(A+ \kappa BK)^{\T}P + P(A+\kappa BK) \prec -I,
\end{align}
where it is noted that controllability of the pair $(A,B)$ (see (\ref{eqn_platoonsys_space_matrices})) ensures that an asymptotically stabilizing feedback matrix $K$ exist. Then, after substituting $\tilde{u}_i = K\delta_i$ in (\ref{eqn_platoonsys_space_x}), the space differentiation of $V$ along trajectories of the resulting controlled platoon system yields
\begin{align}
\mathring{V}(x) &\leq -|\delta_i|^2 + 2\delta_i^{\T}P\bar{\rho}(\xi_i,\xi_{i-1},\xi_0)\bar{w}_i,\\
&\leq -|\delta_i|^2 + 2|\delta_i| \|P\| \|\bar{\rho}(\xi_i,\xi_{i-1},\xi_0)\| |\bar{w}_i|.
\label{eqn_lem_control_proof_step1}
\end{align}
At this point, it is noted that $x\in\Xcal_{\bar{c}_{\delta}}$ for some $\bar{c}_{\delta}>0$ implies that the velocity tracking errors $e_{1,i}$ are bounded for all $i\in\Ical_N^0$, as follows from their relation to the state in (\ref{eqn_delta1}). In fact, there exists $\bar{c}_{\delta}$ such that the velocities $v_i$ satisfy $v_i > 0$ for all $x\in\Xcal_{\bar{c}_{\delta}}$, as follows from (\ref{eqn_veltrackerror_e1}) and Assumption~\ref{ass_vref}. Then, the functions in (\ref{eqn_platoonsys_space_functions}) are well-defined and smooth, which implies by the definition (\ref{eqn_rhok}) that $\bar{\rho}$ in (\ref{eqn_rhobar1})--(\ref{eqn_rhobark}) is smooth. As this function is evaluated on the compact set $\Xcal_{\bar{c}_{\delta}}$, it follows that there exists a constant $c_{\rho}>0$ such that $\|\bar{\rho}(\xi_i,\xi_{i-1},\xi_0)\| < c_{\rho}$. The substitution of this bound in (\ref{eqn_lem_control_proof_step1}) yields, for any $\alpha$ satisfying $0<\alpha<1$,
\begin{align}
\mathring{V}(x) \leq -\alpha|\delta_i|^2 - |\delta_i| \big( (1-\alpha)|\delta_i| - 2c_{\rho}\|P\| |\bar{w}_i| \big),
\end{align}
which leads to the implication
\begin{align}
|x|_{\Scal_i} &= |\delta_i| \geq \frac{2c_{\rho}\|P\|}{1-\alpha}|\bar{w}_i| \nonumber\\
&\;\;\implies\;\;
\mathring{V}(x) \leq -\alpha|\delta_i|^2 = -\alpha|x|_{\Scal_i}^2.
\label{eqn_lem_control_proof_step2}
\end{align}
Following \cite{lin_1995}, (\ref{eqn_lem_control_proof_step2}) implies the result (\ref{eqn_lem_control_setiss}), finalizing the proof of this lemma.

\subsection{Proof of Theorem~\ref{thm_diststringstab}.}\label{app_thm_diststringstab_proof}
In order to prove the theorem, it will first be shown that any controller that achieves (\ref{eqn_lem_control_setiss}) guarantees disturbance string stability. Then, a constant $\bar{c}_{\delta}$ and the set $\Xcal_{\bar{c}_{\delta}}$ will be considered for which the feedback controller $\tilde{u}_i = K\delta_i$ achieves (\ref{eqn_lem_control_setiss}) through Lemma~\ref{lem_control}. In this case, it will be shown that there exist a set of initial conditions and set of disturbances that ensure that $\Xcal_{\bar{c}_{\delta}}$ is invariant, thus satisfying the conditions of Definition~\ref{def_diststringstab}.

In order to obtain a tight upper bound on the input-to-output gain of (\ref{eqn_platoonsys_space_x}) with input $y_{i-1}$ and output $y_i$, the solution of the dynamics for $\Delta_i$ in (\ref{eqn_platoonsys_space_x}) is written explicitly in order to obtain
\begin{align}
|\Delta_i(s)| &\leq \big\|e^{-\kappa^{-1}(s-s_0)}\big\||\Delta_i(s_0)| \nonumber\\
&\qquad+ \int_{s_0}^s\big\|\kappa^{-1}e^{-\kappa^{-1}(s-\vartheta)}\big\| |\delta_{1,i}(\vartheta)| \di\vartheta \nonumber\\
&\qquad+\|y_{i-1}\|_\infty^{[s_0,s]},
\label{eqn_thm_diststringstab_proof_step1}
\end{align}
where the final term is obtained by using
\begin{align}
\int_{s_0}^s\big\|\kappa^{-1}e^{-\kappa^{-1}(s-\vartheta)}\big\| |y_{i-1}(\vartheta)| \di\vartheta
\leq \|y_{i-1}\|_\infty^{[s_0,s]}.
\end{align}
Then, by noting that $|\delta_{1,i}|\leq|\delta_i|=|x|_{\Scal_i}$, it can be observed that the use of a controller that satisfies (\ref{eqn_lem_control_setiss}) leads to a bound on $\Delta_i$ of the form
\begin{align}
|\Delta_i(s)| &\leq \beta_{\Delta}\big(|x_i(s_0)|, s-s_0\big) + \|y_{i-1}\|_{\infty}^{[s_0,s]} \nonumber\\
&\qquad+ \sigma_{\Delta}\!\left(\|\bar{w}_i\|_{\infty}^{[s_0,s]}\right),
\label{eqn_thm_diststringstab_proof_step2}
\end{align}
for some functions $\beta_{\Delta}$ of class $\classKL$ and $\sigma_{\Delta}$ of class $\classKinf$. Next, the output equation in (\ref{eqn_platoonsys_space_x}) yields
\begin{align}
|y_i| \leq (1-\kappa_0)||\Delta_i(s)| + |\delta_i(s)|
\label{eqn_thm_diststringstab_proof_step3}
\end{align}
such that the substitution of the bounds for $\Delta_i$ in (\ref{eqn_thm_diststringstab_proof_step2}) and $\delta_i$ in (\ref{eqn_lem_control_setiss}) implies input-to-output stability of (\ref{eqn_platoonsys_space_x}) subject to any controller that satisfies (\ref{eqn_lem_control_setiss}). Specifically, the disturbances $\bar{w}_i$ act as inputs and a bound of the form (\ref{eqn_thm_localios}) holds with $\gamma_y(r) = (1-\kappa_0)r$. Similarly, by noting that $|x_i| = |\Delta_i| + |\delta_i|$, input-to-state stability of (\ref{eqn_platoonsys_space_x}) follows and a bound of the form (\ref{eqn_thm_localios_iss}) holds. Then, by Theorem~\ref{thm_localios}, the platoon given in (\ref{eqn_platoonsys_space_x}) with a controller satisfying (\ref{eqn_lem_control_setiss}) is disturbance string stable.

Next, consider the specific feedback controller $\tilde{u}_i = K\delta_i$ as in the statement of Lemma~\ref{lem_control}. By this lemma, there exists a constant $\bar{c}_{\delta}$ such that the controller achieves input-to-state stability with respect to the set $\Scal_i$ as in (\ref{eqn_lem_control_setiss}) for trajectories satisfying $x(s)\in\Xcal_{\bar{c}_{\delta}}$ for all $s\geq0$. Consequently, the results on disturbance string stability in Theorem~\ref{thm_localios} hold for these trajectories, as shown in the first part of this proof. However, it is noted that the constants $c$ ($c<\bar{c}_{\delta}$) and $c_w$ in the statement of Theorem~\ref{thm_localios} can be chosen such that the constants $\bar{c}$ and $\bar{c}_w$ in (\ref{eqn_def_diststringstab_bounds}) in the definition of disturbance string stability satisfy $\bar{\beta}(\bar{c},0) + \bar{\sigma}(\bar{c}_w) < \bar{c}_{\delta}$. In this case, the set $\Xcal_{\bar{c}_{\delta}}$ is invariant and the conditions in Lemma~\ref{lem_control}, which are required for controller design, indeed hold. As a result, the feedback controller $\tilde{u}_i = K\delta_i$ (when implemented for all $i\in\Ical_N^0$) achieves disturbance string stability, proving the theorem.

\bibliographystyle{plain}
\bibliography{references}

\end{document}